\documentclass[11pt]{article}


\usepackage{amsthm,amsfonts,amsmath,amsfonts,amsmath,amssymb}
\usepackage{xspace}
\usepackage{algorithm,algorithmic}
\usepackage[textsize=tiny,disable]{todonotes}
\usepackage[shortlabels]{enumitem}
\usepackage{epsfig}
\usepackage{wrapfig}
\usepackage{times}
\usepackage{tikz}
\usetikzlibrary{decorations.pathmorphing}
\usetikzlibrary{calc}
\usepackage{authblk}
\usepackage{url}
\usepackage{xcolor}

\topmargin=-0.4in
\topskip=0pt
\headsep=15pt
\oddsidemargin=0pt 
\textheight=9in 
\textwidth=6.5in 
\voffset=0in

\DeclareMathAlphabet{\mathbbold}{U}{bbold}{m}{n}

\newtheorem{theorem}{Theorem}
\newtheorem{claim}[theorem]{Claim}
\newtheorem{definition}{Definition}
\newtheorem{problem}{Problem}
\newtheorem{lemma}[theorem]{Lemma}

\newtheorem{corollary}[theorem]{Corollary}

\ifdefined\final
\newcommand{\comment}[1]{}
\newcommand{\dqcnote}[1]{}
\newcommand{\rrnote}[1]
\newcommand{\ignore}[1]{}

\else
\newcommand{\comment}[1]{\textsl{\small[#1]}\marginpar{\tiny\textsc{To Do!}}}
\newcommand{\rrnote}[1]{\begingroup\color{red!60!black}Ravi: #1\endgroup}
\newcommand{\ignore}[1]{}
\fi

\usepackage{hyperref}

\hypersetup{
    colorlinks=true,
    linkcolor=blue,
    filecolor=magenta,      
    urlcolor=cyan,
    citecolor=red,
    pdftitle={Overleaf Example},
    pdfpagemode=FullScreen,
    }
\urlstyle{same}

\begin{document}

\title{Timeliness Through Telephones:\\Approximating Information Freshness in Vector Clock Models
\thanks{This material is based upon work supported in part by the U. S. Office of Naval Research under award number N00014-21-1-2243 and the Air Force Office of Scientific Research under award number FA9550-20-1-0080.}} 
\author[1]{Da Qi Chen}
\author[2]{Lin An}
\author[2]{Aidin Niaparast}
\author[2]{R. Ravi}
\author[3]{Oleksandr Rudenko}

\affil[1]{Biocomplexity Institute and Initiatives, University of Virginia, USA. \texttt{wny7gj@virginia.edu}}
\affil[2]{Tepper School of Business, Carnegie Mellon University, USA. \texttt{{linan,aniapara,ravi}@andrew.cmu.edu}}
\affil[3]{Department of Mathematical Sciences, Carnegie Mellon University, USA. \texttt{orudenko@andrew.cmu.edu}}

\date{\today}

\maketitle

\begin{abstract}
We consider an information dissemination problem where the root node in an undirected graph constantly updates its information. The goal is to keep every other node in the graph about the root as freshly informed as possible.
Our synchronous information spreading model uses telephone calls at each time step, in which any node can communicate with at most one neighbor, thus forming a matching over which information is transmitted at each step. 
We introduce two problems in minimizing two natural objectives (Maximum and Average) of the latency of the root's information at all nodes in the network.

After deriving a simple reduction from the maximum rooted latency problem to the well-studied minimum broadcast time problem, we focus on the average rooted latency version.
We introduce a natural problem of finding a finite schedule that minimizes the average broadcast time from a root.
We show that any average rooted latency scheme induces a solution to this average broadcast problem within a constant factor and conversely, this average broadcast time is within a logarithmic factor of the average rooted latency.
Then, we derive a log-squared approximation algorithm for the average broadcast time problem via rounding a time-indexed linear programming relaxation, resulting in a log-cubed approximation for the average latency problem. 

Surprisingly, we show that using the average broadcast time for average rooted latency introduces a necessary logarithmic factor overhead even in trees. We overcome this hurdle and give a 40-approximation for trees. For this, we design an algorithm to find near-optimal locally-periodic schedules in trees where each vertex receives information from its parent in regular intervals. On the other side, we show how such well-behaved schedules approximate the optimal schedule within a constant factor. 


\end{abstract}
\thispagestyle{empty}
\newpage
\pagenumbering{arabic}
\section{Introduction}

Rumor spreading problems, where a crucial piece of information needs to be disseminated throughout a given network as quickly as possible, have been an active research topic for decades. They have applications ranging from increasing throughput in synchronizing networks~\cite{Farley1980Broadcast}, keeping object copies in distributed databases synchronized~\cite{demers1987epidemic}, scheduling inter-satellite communication networks~\cite{chu2018time}, to recreational mathematics~\cite{hajnal1972cure}.
Common minimization objectives in rumor spreading involve the total number of messages, the total number of transmissions (especially when message sizes are bounded in transmissions) and the completion time. This class of problems have a bounded time horizon of interest within which the required messages reach their requisite destinations.

In a related direction, a notion of information latency in social networks was introduced by Kossinets et al. in~\cite{kossinets08}. These problems are inspired by models of keeping clocks synchronized in distributed computing~\cite{lamport,mattern}. 
Motivated by this vector clock model, in this paper, we draw the first connection between various delay objectives in rumor spreading  problems with finite schedules and the notion of information latency in infinite time schedules.


\subsection{Rooted Vector Clock Models for Minimizing Latencies}

In this model, a root node $r$ in an undirected graph $G=(V,E)$ (where $n= |V|$) is constantly updating its information at every time-step and
every node in the graph is interested in the most updated information from the root. We assume information is exchanged 
according to the {\bf telephone} model~\cite{Hedetniemi1988GossipSurvey}, where in each round any node can communicate with at most one neighbor (modeling a telephone call).
Therefore, a set of edges corresponding to a matching is picked at each time-step and both ends of each edge in the matching exchange the
information they have about the root (and thus the more recently informed node updates the other node). A schedule is then a sequence of matchings. The objective is to minimize the latency of any node in the network about the fresh information in the root \emph{at any given time}.

To formalize this model, we modify and use the notation developed for vector clocks~\cite{kossinets08,lamport,mattern}: 
Fix a certain schedule,
let $\phi_{v}^t$ be the {\bf view} that node $v$ has of the root $r$ at time $t$, representing the latest information $v$ has about the root. More precisely, it is the largest time $t' \leq t$ for which the information of $r$  at time $t'$ could be transmitted through the sequence of communications in our fixed schedule and arrive at $v$ by time $t$. Define $\phi_v^0 = 0$ for all nodes $v$ and $\phi_r^t = t$ for all time $t$. When two nodes $u, v$ exchange information, they update their view as follows:
 $   \phi_v^{t+1}=
               \max \{ \phi_v^{t}, \phi_u^{t} \} \quad \forall v\neq r.$

The {\bf information latency} at vertex $v$ at time $t$, denoted $l_v^t$, is given by $t - \phi_v^t$. This represents the freshness of the information $v$ has about $r$ at current time $t$. An alternate interpretation of the latency is that if the current latency $l_{v}^{t} = k$, then in the most recent $k$ steps, there is a path of matched edges along which communication was scheduled in monotonic order (i.e., from $r$ to $v$) that was able to convey the information at $r$ from $k$ steps earlier to $v$ by the current time $t$.
Two natural goals of a centralized information exchange scheme is to minimize the maximum and total (or equivalently, the average) information latency of the nodes at all times. 
\begin{problem} [Max Rooted Vector Clock (MaxRVC)]
Given a graph $G$ with a root $r$, find an infinite schedule that minimizes $\max_{v\in V(G), t\ge 0} l_v^t$. 
\end{problem}


\begin{problem} [Average Rooted Vector Clock (AvgRVC)]
Given a graph $G$ with a root $r$, find an infinite schedule that minimizes $\max_{t\ge 0} \frac{1}{n} \sum_{v\in V(G)} l_v^t$. 
\end{problem}

\subsection{Broadcast Time Problems}
We relate the latency objectives defined above for infinite schedules to a `static' one-phase version of delay that has been widely studied in rumor spreading problems.
Here, the notion of latency is replaced by the notion of delay, the first time at which the information of interest is received.
Consider the one-shot analog of MaxRVC, the classic Minimum Broadcast Time Problem.
We can cast this problem using our latency notation as follows. The goal is to pick a {\bf finite} vector-clock schedule where all nodes (other than the root) start with $l_v^0 = -\infty$. Define the first time at which the latency of any node becomes finite as the {\bf delay} in the root's information (at time 0) reaching this node. The minimum broadcast time problem aims to minimize the maximum delay of the root's information (at time 0) reaching any node.
\begin{problem} [Minimum Broadcast-Time Problem]
Given a graph $G$ and a root $r$, find a finite schedule (under the telephone model) that minimizes the maximum delay of any node $v$. 
\end{problem}

Building on prior work by Korsartz and Peleg~\cite{KP}, Ravi~\cite{Ravi1994Broadcast} designed the first poly-logarithmic approximation algorithm for the Minimum Broadcast Time problem in an arbitrary network.
This work related such schemes to finding spanning trees that simultaneously have small maximum degree and diameter (the so-called {\it poise} of the graph).
After some follow-up work of Bar-Noy et al.~\cite{BGNS98}, Elkin and Kortsarz~\cite{Elkin2003SublogTel} gave the currently best-known $O(\frac{\log k}{\log \log k})$-approximation factor, where $k$ is the number of terminals the information should be delivered to in the more general multicast version.
On the hardness side, Middendorf ~\cite{middendorf} showed that the minimum broadcast-time problem is NP-hard even on 3-regular planar graphs and Elkin and Kortsarz \cite{ek-05} proved that it is hard to approximate better than a factor of 3 unless P$=$NP.

Similar to AvgRVC, we consider the average-time analogue (ABT) of the Minimum Broadcast-Time Problem. 
To the best of our knowledge, ABT was not studied before.
We define the ABT problem and design the first approximations for ABT and 
use it to obtain approximations for AvgRVC. 
\begin{problem} [Average Broadcast-Time Problem (ABT)]
Given a graph $G$ and a root $r$, find a finite schedule (under the telephone model) that minimizes $\frac{1}{n} \sum_{v\in V(G)} d_v$ where $d_v$ is the delay at $v$. 
\end{problem}
Note that if $G$ is a rooted tree, the optimal solution to ABT corresponds to ordering the children at any internal node in non-increasing order of the sizes of the subtrees under them, and sending the information from the root forth in that order greedily.

\subsection{Our Contributions}

For any $\ell_p$-norm of the latencies, we show that there exists a periodic schedule that has period at most $opt+1$ and is 2-optimal with respect to the optimal infinite schedule; Here, $opt$ is the minimum value of the corresponding norm of the latencies and is at most $n$, the number of nodes in the graph (Theorem \ref{periopt}). 
Note that this gives us a huge advantage for solving vector clock problems since it works with many objective functions and reduces the search of an infinite schedule to only finite ones.

Given an $\alpha$-approximation for the Minimum Broadcast-Time Problem, repeating it ad infinitum gives a $2\alpha$-approximation for MaxRVC (Theorem \ref{redn} in Section \ref{sec:prelim}). Combined with previously known approximations for the Minimum Broadcast-Time Problem, we obtain a $2$-approximation for MaxRVC on trees and an $O(\frac{\log n}{\log \log n})$-approximation for MaxRVC on general graphs.
Furthermore, since from any MaxRVC solution, we can reconstruct a minimum broadcast time solution of the same objective value,
by combining the $2$-approximation loss with the $(3-\epsilon)$-inapproximability of Minimum Broadcast Time~\cite{ek-05}, we get a $(\frac32 - \epsilon)$-inapproximability for MaxRVC unless $P=NP$. Our main contributions are as follows.

\begin{enumerate}
\item \textbf{Relating Average Broadcast Time and AvgRVC}:
We show that the optimal values of ABT and AvgRVC are within a logarithmic factor of each other.
\begin{theorem}
    \label{th:sandwich}
    Let $G$ be an $n$-node graph with root $r$. Given a schedule for ABT with average broadcast time $T_{BC}$, we can derive (in polynomial time) an infinite schedule for AvgRVC with average latency of at most $2(\log n)T_{BC}$ at any time. Conversely, given an infinite schedule for AvgRVC with average latency of at most $T_{VC}$ at any time, we can derive (in polynomial time) a finite schedule for ABT with average delay of at most $4T_{VC}$. Thus, given a polynomial-time $\alpha$-approximation algorithm for the ABT, we can devise a polynomial-time $8\alpha \log(n)$-approximation algorithm to AvgRVC.
\end{theorem}

It is not hard to extract an ABT schedule by examining the portion of an AvgRVC scheme that leads to the set of latencies at any given time step.
In the other direction, to derive a solution to AvgRVC from a solution to ABT, unlike the MaxRVC analog, we need to repeat communications more often to those with lower delay than others. We show that this can be accomplished with only a logarithmic factor blow-up in the average latency (Lemma~\ref{createinf}). 

\item \textbf{Approximating ABT and AvgRVC in general graphs}: The following is one of our main results.

\begin{theorem}
    \label{th:abtapprox}
    ABT has an $O(\frac{\log^2 n}{\log \log n})$-approximation algorithm in $n$-node graphs.
\end{theorem}

As a simple corollary of Theorems~\ref{th:sandwich} and \ref{th:abtapprox}, we can approximate AvgRVC in general graphs. 

\begin{theorem}
    \label{th:AvgRVCapprox}
    AvgRVC has an $O(\frac{\log^3 n}{\log \log n})$-approximation algorithm in $n$-node graphs.
\end{theorem}

We prove Theorem~\ref{th:abtapprox} by using a natural time-indexed linear programming formulation. The formulation helps us identify the power-of-two time scale in which the optimal (fractional) transmission scheme reaches each node. We use this to partition the terminals based on when they are informed. For each time scale, we project the fractional solution certifying this time scale into the original graph to get a feasible LP solution to the \emph{minimum poise Steiner tree} problem on this subset of nodes of value equal to the time scale. 
This is the problem of finding a Steiner tree connecting a given set of terminal nodes to the root with the minimum poise, defined as the sum of the maximum degree of any node in the tree and the diameter of the tree.
We then round this LP fractional solution using the results of~\cite{latin18} and~\cite{Ravi1994Broadcast} to schedule these nodes within a log-squared factor of the time scale value. By scheduling the partition of the node sets to be informed by these multicast trees in increasing order of their time scales, we ensure that there is only a constant factor loss in the approximation ratio for ABT on top of the LP based rounding.  Using the reduction mentioned above, this also gives a poly-logarithmic approximation algorithm for the AvgRVC in general graphs.

\item \textbf{AvgRVC in trees}: Unfortunately, unlike the relationship between MaxRVC and Minimum Broadcast-Time, we show that there exists an $\Omega(\log n)$-gap between an optimal value for AvgRVC and for ABT even for the following tree $T$. Let $r$ be the root of $T$. Let $v_1, ..., v_d$ be the children of $r$. Suppose each child $v_i$ is the root of a complete binary tree of size $n_i=\lceil n/(2i^2)\rceil$. Choose $d$ such that $\sum_{i=1}^d n_i = n$. Note that $d\ge \sqrt{n}$.
The details are in Section \ref{sec:gap}. 

Since the optimal value for ABT has an $\Omega(\log n)$-factor gap with that of AvgRVC even for trees, we further study the structure of optimal vector clock schedule on trees to overcome this gap.
\begin{theorem}
\label{treeapprox}
There exists a $40$-approximation for the Average Rooted Vector Clock Latency Problem (AvgRVC) on trees. 
\end{theorem}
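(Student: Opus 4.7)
The plan is to show that on trees there is always a near-optimal schedule that is \emph{locally periodic}: each non-root vertex $v$ receives information from its parent $p(v)$ at a regular interval $\pi_v$ with a fixed offset $o_v$ within that period. Once we are in this structured world, the steady-state latency at $v$ decomposes cleanly into (i) a \emph{period term} that time-averages to roughly $\pi_u/2$ summed along the root-to-$v$ path, and (ii) an \emph{offset term} that equals a fixed broadcast-style delay accumulated along the same path. We then design the schedule by approximating these two components separately and combining them.

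First I would reduce to locally periodic schedules. Starting from an optimal schedule and applying Theorem~\ref{periopt}, we may assume it is periodic of length $O(\opt)$ with latency at most $2\,\opt$. Within one period, let $1/\pi_v$ be the frequency at which edge $(v,p(v))$ delivers \emph{fresh} information into $v$, and replace the optimal schedule by one in which every such edge is activated at a strictly uniform rate $1/\pi_v$ with a single offset $o_v$. A standard averaging/smoothing argument shows that this regularization loses only a constant factor, since in any schedule of average latency $L$ the time-average gap between fresh arrivals on any tree edge is already $O(L)$.

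Next, in this locally periodic world the average latency is bounded by
\[
\frac{1}{n}\sum_{v\neq r}\Bigl(\sum_{u \in \mathrm{path}(r,v)} \tfrac{\pi_u}{2}\Bigr) \;+\; \frac{1}{n}\sum_{v\neq r}\Bigl(\sum_{u \in \mathrm{path}(r,v)} o_u\Bigr),
\]
where the first (period) sum equals $\tfrac{1}{2n}\sum_u \pi_u\cdot |T_u|$, with $T_u$ the subtree rooted at $u$, and the second (offset) sum is exactly the average delay of a \emph{single} broadcast round that respects the chosen matchings. The period term is lower-bounded by the matching LP at each internal node ($\sum_{e\ni u} 1/\pi_e\le 1$), and its optimum can be rounded on the tree by a dynamic program that preserves the matching constraint up to constants. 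The offset term is lower-bounded (up to a constant) by the optimal ABT value on the tree, and since ABT on trees admits a polynomial-time approximation (one can combine the broadcast DP of~\cite{esa19} with an averaging argument), I would plug in such a solution.

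Finally I would assemble the two ingredients into a single infinite schedule: at each vertex $u$, allocate its telephone slots among its children according to the rounded LP frequencies $1/\pi_v$, and choose offsets $o_v$ mimicking the approximate ABT solution so that fresh information actually propagates down each root-to-leaf path on schedule. The main obstacle I foresee is this last assembly step: the fractional LP frequencies must be rounded (likely to dyadic rates) so that at every vertex the incident edges can fire at the prescribed rates without violating the matching constraint, and the resulting integer periods must still be coordinated with the offset schedule coming from ABT. I expect this packing/coordination to be where the constant factors accumulate --- the factor-of-two losses from Theorem~\ref{periopt}, from reversing schedules, from LP rounding, from converting edge sets into matchings (as in Lemma~\ref{lm:clocktoroot}), and from bounding the offset by ABT --- and carefully tracking all of them should yield the claimed constant $40$.
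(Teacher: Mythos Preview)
Your high-level plan is the paper's plan: reduce to a locally periodic (in fact \emph{regular}) schedule, decompose the steady-state latency into a period part and an offset part, approximate the period assignment via an LP plus dynamic programming (enforcing $\sum_{v\in N(u)} 1/p^v\le 1$ and rounding to powers of two), and set offsets using the optimal ABT ordering on the tree. Where you diverge from the paper is in two technical places, and both are genuine gaps rather than mere imprecision.

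\medskip
\textbf{First, the decomposition formula is wrong.} In a regular schedule on a tree, the latency at $v$ at time $t$ is exactly
\[
l_v^t \;=\; \sum_{u\in P_v\setminus\{r\}} o^u \;+\; (t-t'_v),
\]
where $t'_v$ is the most recent time in $v$'s update sequence; hence $l_v^t \le \sum_{u\in P_v} o^u + p^v$ (this is the paper's Lemma~\ref{lem:reglat}). The period contributes only the single term $p^v$, \emph{not} $\sum_{u\in P_v}\pi_u/2$. Your displayed bound therefore asks you to control $\tfrac{1}{2n}\sum_u p^u\,|T_u|$, which is not the quantity the LP-plus-DP approximates (that quantity is $\tfrac{1}{n}\sum_v p^v$, cf.\ Problem~\ref{prob:regperi} and Corollary~\ref{col:bound2}), and it is not clear your version can be bounded by $O(\mathrm{opt}_{VC})$ in general.

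\medskip
\textbf{Second, and more seriously, you are missing the mechanism that makes the offset bound work.} You cannot simply ``plug in'' an ABT ordering as the offsets, because an offset assignment must be \emph{compatible} with the chosen periods: at every internal node $u$ and every residue class $i\in\{1,\dots,p^u\}$ one needs $\sum_{v\in N(u):\,o^v=i} 1/p^v \le 1/p^u$ (the paper's Condition~\eqref{offsum}). Conversely, offsets that obey this constraint are, for a generic child $v$, only bounded by $p^{\mathrm{parent}(v)}$, which has nothing to do with the ABT rank of $v$. The paper resolves this tension with two coupled ideas you do not mention:
\begin{itemize}
\item Lemma~\ref{lem:off} shows one can always find a regular offset assignment satisfying~\eqref{offsum} with the additional guarantee that whenever $p^v=p^{\mathrm{parent}(v)}$, the offset $o^v$ is at most the ABT rank $\pi(v)$.
\item Lemma~\ref{lem:bound1} then splits $\sum_w\sum_{u\in P_w} o^u$ along each root-to-$w$ path into (i) vertices with $p^u=p^{\mathrm{parent}(u)}$, whose contribution is bounded by $n\cdot \mathrm{opt}_{ABT}\le 2n\cdot \mathrm{opt}_{VC}$ via Lemma~\ref{lem:VCABT}, and (ii) vertices with $p^u\ge 2p^{\mathrm{parent}(u)}$, whose offsets form a geometrically increasing sequence along the path and hence sum to at most $p^w$, contributing $\sum_w p^w$ in total.
\end{itemize}
Without this split, neither ``offsets bounded by ABT'' nor ``offsets bounded by parent's period'' alone gives $O(n\cdot\mathrm{opt}_{VC})$. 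This is exactly the ``coordination'' difficulty you flag in your last paragraph, but it is not just where constants accumulate---it is where the argument either goes through or does not. The remaining pieces (the power-of-two interleaving of Lemma~\ref{lem:regsched} via Claim~\ref{cl:partition}, and the exact ABT order of Lemma~\ref{lem:ABTopt}) are what let the paper actually realize the schedule and track the constants to~$40$.
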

We obtain this result by devising near-optimal schedules that are locally periodic at each node and where each node has the same wait-time (offset value) in hearing fresh information from its parent
This is our most technical contribution (Section~\ref{sec:tree}).
Since these techniques directly tackles the vector clock problems, they potentially provide alternate strategies in designing algorithms for infinite schedules in general without reducing them to delay problems.

\end{enumerate}

\subsection{Related Work}
\label{rel-work}
\paragraph{Minimum Broadcast Time.} The most widely studied problem related to our work is that of minimizing the maximum broadcast time under the telephone model. 
In addition to being hard to approximate to within a factor better than 3 in general graphs unless P$=$NP~\cite{ek-05}, it is known to be NP-complete even on special classes of graphs such as bipartite
planar graphs, grid graphs, complete grid graphs, split graphs and chordal graphs~\cite{jansen1995minimum}.
Several heuristic approaches have been proposed for the problem \cite{scheuermann1984heuristic,hasson2004novel,harutyunyan2006efficient,harutyunyan2014new,de2018heuristics}, as well as approaches based on integer programming~\cite{ivanova2021computing}.

In~\cite{latin18}, Iglesias et al. designed an $O(\frac{\log^3k\log n}{\log\log n})$-approximation for a multi-commodity multicast generalization for $n$-node planar graphs with $k$ terminals.
One of the key contributions of their work is to show how to round a natural linear programming relaxation for minimum poise Steiner trees within a logarithmic factor.

\paragraph{Packet Routing.} In the store-and-forward packet routing model, the objective is to minimize source-to-destination packet delay, but the constraint is on the edges rather than the nodes, i.e., at most one packet crosses an edge at each round. However, unlike our models, the transmission paths are typically specified in this problem leaving only the scheduling component. A simultaneous multicast (unicast) instance in this model is given by a set of rooted trees (paths) on a network graph, and the objective is for each root to send its message to the leaves of its corresponding tree (path) through the edges of the tree (path) in the minimum number of rounds. These problems are studied in \cite{leighton1994packet, ghaffari2015near, haeupler2019near}. 

\paragraph{Random rumor spreading and distributed computing models.} Inspired by the application keeping distributed databases synchronized~\cite{demers1987epidemic} using a simple epidemic protocol, Feige et al.~\cite{feige1990randomized} provided an early analysis in a long line of work on what is known as uniform broadcasting: in a given graph representing the possible communication network, each informed node picks a random neighbor and communicates its information to it.
This can be viewed as a model of pushing information to a random neighbor, leading to similarly defined models of every uninformed node pulling from a random neighbor, and also the push-pull model with both types of communications at each time step.
Culminating a long line of work, Giakkoupis~\cite{giakkoupis2011tight} showed that for any graph with conductance $\phi$, the Push-Pull algorithm distributes a rumor to all nodes of the graph in $O(\phi^{-1}\log(n))$ with high probability, and that this bound is tight.
Chierichetti et al. \cite{chierichetti2018rumor} show if the degrees of the two endpoints of each edge in the network differ by at most a constant factor, then both Push and Pull also attain $O(\phi^{-1}\log(n))$ with high probability.

Another rumor spreading problem with random local phone calls but which considers continuous updates of rumors at each node like our model is the influential work of Karp et al.~\cite{karp2000randomized}. 
The underlying graph of possible communications is complete in this model and they show how the number of transmissions for all nodes to have reliably heard the rumor can be reduced from the $O(n \log n)$ bound for the random Push and random Pull algorithms to $O(n \log \log n)$ for a new Push-Pull algorithm.
Note that all of these simple algorithms are purely local and have no global knowledge of the network.

In the extensive distributed computing literature, the two most popular models of communication are GOSSIP and LOCAL which correspond to each node communicating with all neighbors versus a single neighbor at each time step respectively. E.g., Censor-Hillel et al.~\cite{censor2017rumor} consider the rumor spreading problem, where each node has some initial input and is required to collect the information of all other nodes, and give an algorithm that takes $O(D+poly(\log(n)))$ rounds in a graph on $n$ nodes and diameter $D$. 
Unlike the above models, the algorithms here are neither local nor necessarily random.
However, as with the simple rumor spreading models, the key difference with our work is that they assume that each node initially only knows the identity of its neighbors.
In these models from distributed computing, nodes gets to know a wider local neighborhood in the underlying communication graph by repeatedly communicating with its neighbors. In contrast, we assume a central scheduler that has access to the whole communication network at the start.

\paragraph{Other rumor spreading models.} There are several alternate models suggested for spreading rumors in networks.
Feige et al.~\cite{feige2017contagious} consider a process in random graphs where a vertex is active either if it belongs to a set of initially activated vertices or if at some point it has at least $r$ active neighbors. They define an contagious set as a set whose activation results with the entire graph becoming active, and prove sharp bounds on the size of such sets in random graphs. 
However, on worst-case instances, approximating the minimal size of a contagious set within a ratio better than $O(2^{\log^{1-\delta} n})$ where $n$ is the number of vertices is intractable for every $\delta \in (0,1)$, unless $NP \subseteq DTIME(n \cdot poly(\log n))$~\cite{chen2009approximability}.
This model itself is a specialization of a very popular stochastic model of influence spread called the independent cascades model in networks due to Kempe et al.~\cite{kempe2003maximizing}.
Another closely related model that involved both additive and multiplicative thresholds at nodes that control the spread of influence is studied in~\cite{zehmakan2021spread}.
While these models analyze the optimal placement of source nodes of the rumor and their effects, our work studies centralized schedules for rumor propagation with low latency.

\section{Relating Vector Clocks and Broadcast Problems}
\label{sec:prelim}
\label{sec:AvgRVC}

First, we provide two simple results to develop intuition behind good vector clock schedules. 
Theorem ~\ref{periopt} shows that although a vector clock solution is an infinite schedule, one can transform it into a periodic one with a slight loss in the objective value. Thus, for the sake of approximation, we only need to consider finite schedules. 



\begin{theorem}
	Suppose $\mathcal{S}$ is a scheme such that $||l^t||\le opt$ for all $t\in \mathbb{N}$ where $||x||$ denotes a fixed $\ell_p$-norm. Then, there exists a periodic scheme $\mathcal{S}'$ with period $opt+1$ and latency vector $l^t{}'$ such that $||l^t{}'||$ is at most $2opt$ for all $t\in\mathbb{N}$. 
	\label{periopt}
\end{theorem}

Next, we provide a simple reduction from MaxRVC to the Minimum Broadcast-Time Problem. 
The proof requires a closer examination of the schedules for the two problems and attempts to construct one from the other. 
Refer to Section \ref{sec:app} for details.
\begin{theorem}
    If there exists an $\alpha$-approximation for the Minimum Broadcast-Time Problem, then there exists a $2 \alpha$-approximation for MaxRVC.
    \label{redn}
\end{theorem}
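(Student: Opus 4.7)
The plan is to combine two complementary reductions tying MaxRVC to Minimum Broadcast-Time. First, I will show that the optimum broadcast time $T^*$ from $r$ is upper bounded by the optimum MaxRVC value $L^*$. Second, I will convert an $\alpha$-approximate broadcast schedule of length $B \le \alpha T^*$ into an infinite periodic MaxRVC schedule of maximum latency at most $2B$. Composed, this yields a periodic MaxRVC schedule of value at most $2B \le 2\alpha T^* \le 2\alpha L^*$, proving the theorem.

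For the lower bound $T^* \le L^*$, I would take any optimal MaxRVC schedule $\mathcal{S}^*$ and examine its first $L^* + 1$ rounds. The MaxRVC guarantee at time $L^* + 1$ gives $l_v^{L^*+1} \le L^*$, i.e., $\phi_v^{L^* + 1} \ge 1$ for every $v$. This certifies a time-monotone sequence of matched exchanges in $\mathcal{S}^*$ starting from $r$ at some time $t' \ge 1$ and reaching $v$ by round $L^* + 1$. Since $r$ is always awake and its own view is freshest, one can idle at $r$ until time $t'$ and then use the same exchanges, so the same matching sequence witnesses delivery of $r$'s time-$1$ message to every $v$ within $L^*$ rounds. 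Re-indexing, the matchings of rounds $1,\ldots,L^*$ form a valid telephone broadcast from $r$ of length at most $L^*$.

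For the upper bound, let $\mathcal{B}$ be an $\alpha$-approximate broadcast schedule of length $B \le \alpha T^*$, and let $d_v \le B$ be $v$'s first-informed time in $\mathcal{B}$. Define $\mathcal{S}'$ to be the infinite schedule whose round-$t$ matching equals the round-$(t \bmod B)$ matching of $\mathcal{B}$. The key observation is that the matchings used in rounds $[kB, kB + d_v)$ of $\mathcal{S}'$ replay exactly the informing path of $v$ in $\mathcal{B}$. Since $\phi_r^{kB} = kB$, a short induction along this path using $\phi_w^{t+1} = \max\{\phi_w^t, \phi_u^t\}$ yields $\phi_v^{kB + d_v} \ge kB$ for every integer $k \ge 0$. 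Then for $t \in [kB + d_v,\,(k+1)B + d_v)$ we get $l_v^t \le t - kB < B + d_v \le 2B$, and trivially $l_v^t \le t < B$ for $t < d_v$. Hence $\mathcal{S}'$ attains maximum latency strictly less than $2B$, completing the chain.

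I expect the only nontrivial step to be the path-replay argument in the upper bound, which requires checking that the vector-clock update rule faithfully transports $r$'s current view $kB$ down the informing path during the $k$-th period; once that induction is verified, the lower bound is a clean truncation argument and the remaining inequalities are arithmetic.
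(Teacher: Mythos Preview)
Your proposal is correct and follows essentially the same approach as the paper: both directions (bounding $opt_{BC}$ by $opt_{VC}$ via the initial segment of an optimal MaxRVC schedule, and bounding MaxRVC by $2B$ via periodic repetition of an $\alpha$-approximate broadcast) match the paper's proof. If anything, you are more careful than the paper about the off-by-one in the lower bound and about spelling out the path-replay induction that gives $\phi_v^{kB+d_v}\ge kB$ in the upper bound.
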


Since the Minimum Broadcast-Time Problem is solvable exactly on trees and there exists $O(\frac{\log n}{\log \log n})$-approximations for general graphs~\cite{ek06}, we easily obtain the following 
corollary to Theorem~\ref{redn}.
\begin{corollary}
For MaxRVC, there exist a $2$-approximation for MaxRVC on Trees and an $O(\frac{\log n}{\log \log n})$-approximation on general graphs. 
\end{corollary} 

Next, we prove Theorem~\ref{th:sandwich}. 
We begin by proving the following useful lemma that aids in creating an infinite schedule from a finite one. The idea is to interleave increasingly larger prefixes of the finite schedule (in powers of two) at an appropriate rate to control the latency of each node at any time.

\begin{lemma}
	Given a finite ordered set $S= \{S_1, S_2, ..., S_t\}$, the elements of $S$ can be placed into  an infinite sequence $\mathcal{R}$ such that for all $1\le i\le t$, any contiguous subsequence of $\mathcal{R}$ of length $2i\log t$ contains the sets $S_1, ..., S_i$, appearing in increasing order.  
	\label{createinf}
\end{lemma}

\begin{proof}
	Let $\mathcal{R}=\{R^j\}_{j\ge0}$ where $R^j$ are themselves subsequences of length $\log t$ that we will call blocks.  We put $S_i$ into block $R^j$ if $i\equiv j (\mod 2^s)$ where $2^s$ is the largest power of $2$ less than or equal to $i$. Within each block $R^j$, the elements are placed in increasing order based on their index $i$ in $S$. Formally, $\mathcal{R}=R^0_1R^0_2...R^0_{\log t}R^1_1R^1_2...R^1_{\log t}...$ and $R^j_k=S_{2^{k-1}+j'}$ where $j'$ is the remainder of $j$ after dividing $2^{k-1}$.
	Note that if $t$ is not a power of $2$, some blocks $R^j$ will be slightly shorter than others but all have length at most $\log t$. 
	
	Another way to view this sequence $\mathcal{R}$ is the following. Element $S_1$ is the first element in every block $R^i$. Then, $S_2, S_3$ alternates as the second element of the blocks. Then, $S_4, S_5, S_6, S_7$ rotates as the third element in every block, and so on. One can easily check from the definition that no two elements in $S$ is assigned to the same position $R^j_i$ in the infinite sequence $\mathcal{R}$. 
	
	Fix some $i$ where $1\le i\le t$ and $2^s+p=i$ such that $0\le p<2^s$. Fix time $t'\ge 0 $ and consider a contiguous subsequence of length $4i\log t$ starting at time $t'$ in $\mathcal{R}$. Suppose $t'$ corresponds to a time in block $R^J$. Let $I$ be the smallest multiple of $2^s$ larger than $J$. Note that $I-J < 2^s\le i$. Note that $S_1$ is in $R^I$. Then $S_2, S_3$ will appear as the second element in $R^{I+2}, R^{I+3}$ respectively. Elements $S_4, S_5, S_6, S_7$ will appear as the third element in blocks $R^{I+4}, R^{I+5}, R^{I+6}, R^{I+7}$ respectively. This pattern continues until element $S_i$, appearing in block $R^{I+i}$, ensuring the subsequence $S_1, ..., S_i$ appears in order after visiting $i$ blocks starting at $I$. 
	Since $I+i - J \le i+2^s \le 2i$ and each block $R^j$ has length at most $\log t$, our result follows immediately. 

\end{proof}

\begin{proof}[Proof of Theorem~\ref{th:sandwich}]
Let schedule $\mathcal{S}$ be a solution to ABT rooted at $r$ with average broadcast time $T_{BC}$. Note that we may assume $\mathcal{S}$ does not stay idle and informs a new vertex every step. Thus, we may assume $\mathcal{S}$ takes at most $n$ steps to complete. We now apply Lemma~\ref{createinf} to the steps (matchings) of schedule $\mathcal{S}$ to achieve an infinite schedule $\mathcal{R}$. It follows from Lemma~\ref{createinf} that if the delay of vertex $v$ at schedule $\mathcal{S}$ is $d_v$, it gets informed about $r$ in schedule $\mathcal{R}$ in every $2d_v \log(n)$ steps. This concludes that the average latency of schedule $\mathcal{R}$ is at most $2\log(n)T_{BC}$ at any time.

For the other direction, let $\mathcal{R}'$ be an infinite schedule for AvgRVC with average latency of at most $T_{VC}$ at any time. We may assume $\mathcal{R}'$ informs each vertex at some point, because otherwise $T_{VC}$ would be infinite and we do not have anything to prove. 
Let time $t_0$ be the first time when every vertex receives some information from $r$. For any vertex $v$, in the past $l_v^{t_0}$ steps, $\mathcal{R}'$ has sent a message from $r$ to $v$. Let $\mathcal{S}'_i$ be the subschedule of $\mathcal{R}'$ from time $\max(1,t_0-2^i+1)$ to time $t_0$ for $0\le i\le I = \lceil \log t_0\rceil$. We claim that $\mathcal{S}'=\mathcal{S}'_0\mathcal{S}'_1...\mathcal{S}'_I$ provides the desired schedule.
Given a vertex $v$, suppose $2^i$ is the smallest power of $2$ such that $l_v^{t_0} \leq 2^i$. Then, in subschedule $\mathcal{S}'_i$, $r$ broadcasts an update to $v$. Since the length of the subschedules are powers of $2$, it follows that $\mathcal{S}'_i$ is completed in $\mathcal{S}'$ before time $2\cdot2^i$. Since $2^i < 2l_v^{t_0}$, it follows that in schedule $\mathcal{S}'$, vertex $v$ gets informed in at most $4l_v^{t_0}$ steps. Therefore the average broadcast time of $\mathcal{S}'$ is at most $\frac{1}{n}\sum_{v\in V(G)} 4l^{t_0}_v \le 4T_{VC}$.

\end{proof}

\section{Average Broadcast Time}
\label{sec:abt}

In this section we prove Theorem~\ref{th:abtapprox}.
Our strategy is as follows: we first use an LP on a time-indexed auxiliary graph to help us identify which set of nodes we should inform first, and partition the nodes in this order. Then, for each group in the partition, we project the fractional solution to an LP solution that finds a subgraph spanning the group with minimal poise (the sum of the diameter and the maximum degree). Lastly, we use the following two lemmas from \cite{latin18, Ravi1994Broadcast} to round this fractional poise solution to obtain a feasible schedule for each group. 

\begin{lemma}\label{lem:latin18}
\cite{latin18}
    Given a fractional feasible solution of value $L$ to POISELP (defined below), a natural LP relaxation of a minimum poise of a tree connecting a root $r$ to terminals $R$, there is a polynomial time algorithm to construct a tree spanning $r\cup R$ with poise $O(L\log k)$ where $k=|R|$. 
\end{lemma}

\begin{lemma}\cite{Ravi1994Broadcast}
    \label{lem:Ravi1994Broadcast}
    Given a tree with $n$ nodes and poise $L$, there is a polynomial time algorithm to construct a broadcast scheme of length $O(L\frac{\log n}{\log\log n})$ from any root. 
\end{lemma}

Given an instance of ABT on graph $G$ with root $r$, consider the following time-indexed graph $G'$. For every vertex $v\in G$, we make $n+1$ copies $v^0, v^1, ..., v^n$. The copy $v^i$ represents the vertex at time $i$. For all $0\le i<n$, add an arc $v^iu^{i+1}$ if and only if $vu\in E(G)$, representing that $v$ is capable of sending a message at time $i$ to $u$. We denote these newly added arcs as $E_1$. Furthermore, for all $0\le i < n$, add an arc $v^iv^{i+1}$, representing that $v$ can also hold the information to the next time step. For every vertex $v$, we add a sink $v^s$ and add arcs $v^iv^s$ for every $0\le i \le n$. The goal is to send flows from the root to all of the sinks $v^s$, representing the path of broadcasting. 

In the following LP, the variable $z_e$ represents whether edge $e\in E(G')$ is used for the purpose of communication. Constraint \eqref{abtlpdeg}, imposed only on edges in $E_1$, ensures that a vertex can either receive or send a message to only one of its neighbors at each round.
Without loss of generality, we can assume that the $z$-value of arcs of the type $v^iv^{i+1}$ that are not in $E_1$ are 1 representing that any information at $v$ can always be carried over to the next time. 
For every vertex $v\in V(G)$, we have flow variable $f_v(e)$ defined on all edges $e\in E(G')$. Constraints \eqref{abtlpflow1} - \eqref{abtlpflow3} are flow constraints, ensuring the conservation of flows and one unit of flow (hence one message) must be sent from the root $r$ and reaches its recipient $v^s$ for all $v\in V(G)$. Constraint \eqref{abtlpcons1} relates the two types of variables ensuring that flows can be sent along $e\in E(G')$ only if the edge is available. The last constraints are the non-negativity constraints. Notice in an integral solution, a flow reaching $v^s$ via vertex $v^i$ corresponds to a valid path of communication of length $i$. Thus, our objective correctly captures the sum of delays whose minimization is equivalent to solving ABT. Note that since the root can inform everyone in $n$ steps by greedily informing one additional vertex at a time, the above LP is feasible.

\begin{align}
	\min \qquad & \sum_{v\in V(G)} \sum_{e= v^iv^s, 0\le i\le n}if_v(e)\notag & [ABTLP]  \\
    \text{s.t.}&\sum_{e \in \delta^+(v^i)\cap E_1}z_e + \sum_{e \in \delta^-(v^{i+1})\cap E_1}z_e \le 1; & \forall v\in V(G), 0\le i\le n-1  \label{abtlpdeg} \\
	& \sum_{e\in \delta^-(w^i)} f_v(e) = \sum_{e\in \delta^+(w^i)}  f_v(e)  & \forall v, w\in V(G), 1\le i\le n;  \label{abtlpflow1} \\
	& \sum_{e\in \delta^+(r^0)} f_v(e) = 1 & \forall v\in V(G);  \label{abtlpflow2} \\
	& \sum_{e\in \delta^-(v^s)} f_v(e) = 1 & \forall v\in V(G);  \label{abtlpflow3} \\
	& f_v(e) \le z_e & \forall e\in E(G'), v\in V(G);  \label{abtlpcons1} \\
    & z_e, f_v(e) \geq 0  & \forall e \in E(G'), v\in V(G) \label{abtlpcons2}
\end{align}

\paragraph{Partitioning the vertices}
Given a fractional solution to ABTLP with objective value $opt_{ABTLP}$, we shall partition the vertices and round the solution into separate groups based on their fractional broadcasting time. For $v\in V(G)$, let $l(v) = \sum_{i=0}^n if_v(v^iv^s)$. Let $V_j := \{v\in V(G): 2^{j-1} \le l(v) < 2^j\}$ for $1\le j\le \lceil \log n\rceil$. The set $V_j$ essentially represents the set of vertices we expect to be informed between time $2^{j-1}$ and $2^j$. In order to achieve a schedule that uses this much time, we consider the following POISELP, introduced in \cite{latin18}. Its goal is, given a graph $G$ with root $r$ and a set of terminals $R$, to find a subgraph that connects the vertices in $\{r\} \cup R$ and minimizes the sum of the maximum degree and the maximum depth; Minimal such subgraphs are Steiner trees connecting $r$ and $R$. Let $\mathcal{P}(t, r)$ denote the set of all paths from $t$ to $r$. Given a path $P$, let $len(P)$ denote the length of path $P$. 

\begin{align}
	\min \qquad & L = L_1 +L_2 & [POISELP] \\
    \text{s.t.}&\sum_{e \in \delta(v)} x_e\le L_1 & \forall v\in V(G)\setminus r \label{poiselpdeg}\\
	& \sum_{P \in \mathcal{P}(t, r)}  y_t(P) = 1  & \forall t\in R\label{poiselppath}\\
	& \sum_{P \in \mathcal{P}(t, r)} len(P) y_t(P) \le L_2  & \forall t\in R\label{poiselpdiam}\\
	& \sum_{P\in \mathcal{P}(t, r): e\in P} y_t(P)\le x_e & \forall t\in R, e\in E(G); \label{poiselpcons1}\\
    & x_e, y_t(P) \ge 0  & \forall e \in E(G), t\in R, P\in \mathcal{P} \label{poiselpcons2}
\end{align}

In this LP, $x_e$ is an indicator variable representing whether the edge $e$ will appear in the final graph. Thus, Constraint \eqref{poiselpdeg} ensures the final graph has bounded degree. The indicator variables $y_t(P)$ is defined for every terminal $t$ and every path $P\in\mathcal{P}(t, r)$, representing whether the path is present in the final tree. Constraint \eqref{poiselppath} ensures the terminals are connected to the root and Constraint \eqref{poiselpdiam} bounds the length of these paths. Constraint \eqref{poiselpcons1} relates the two variables ensuring that a path can be used only if all its edges are present. The last constraint is the non-negativity constraint. The objective function is thus the sum of the maximum degree and maximum path length. 

\begin{lemma}
    \label{lem:MaxToPoise}
    Given graph $G$ with root $r$, an optimal solution to ABTLP and a partition $\{V_i\}_{i=1}^{\lceil\log n\rceil}$ obtained from the discussion above, there exists a fractional solution to the POISELP connecting the root $r$ and terminals $V_i$ with objective value $6\cdot 2^i$ for all $1\le i\le \lceil\log n\rceil$. 
\end{lemma}

\begin{proof}
Fix $i\in \{1, ..., \lceil\log n \rceil\}$. Consider a vertex $v\in V_i$. Recall that the optimal solution sends one unit of flow from $r$ to $v^s$ and $\sum_j j\cdot f_v(v^jv^s) < 2^i$. Then, by Markov's inequality, at least half of the flow reaches $v^s$ via $v^j$ where $j < 2^{i+1}$. Let us only consider these flows and denote them by $f'_v(e)$ for each $v \in V(G)$ and $e \in E(G')$ (we call this process \textit{filtration}). One can decompose these fractional flows into distinct paths $\mathcal{P}'_v$ in $G'$ with corresponding fractional flows $f'_v$ that sums to at least $1/2$, i.e., $\sum_{P'\in \mathcal{P}'_v}f'_v(P') \geq \frac{1}{2}$. Here we slightly abuse the notation $f'_v$, and use it for both edges and paths in the path decomposition. Note that $f'_v(e)=\sum_{P'\in \mathcal{P}'_v: e \in P'} f'_v(P')$.  Each path $P'\in \mathcal{P}'_v$ naturally corresponds to a path $P \in \mathcal{P}(v, r)$; thus, let $y_v(P)=\sum_{P': P'\in \mathcal{P}'_v\text{ corresponds to } P} f'_v(P')$ for all $P\in \mathcal{P}(v, r)$. Let $x_e=\max_{v\in V_i} \sum_{P: e\in E(P), P\in \mathcal{P}(v, r)} y_v(P)$. 
Also, let $E_e'\subseteq E(G')$ be the set of edges $e'\in G'$ that correspond to $e \in G$. 
Now we check the feasibility of this solution for POISELP. 

For Constraint \eqref{poiselpdeg}, consider some $w\in V(G)$. For each $e\in \delta(w)$ in $G$, there exists $v_e\in V_i$ such that $x_e = \sum_{P: e\in E(P), P\in \mathcal{P}(v_e, r)} y_{v_e}(P)$. 
This implies that
\begin{center}
$x_e =\sum_{P':P'\in \mathcal{P}'_{v_e}, P' \cap E'_e \neq \emptyset} f'_{v_e}(P') \leq \sum_{e'\in E_e'} f'_{v_e}(e')$.
\end{center}
Note that $\cup_{e\in \delta_G(w)} E'_e$ is a subset of all the edges in $E_1\subset E(G')$ that is incident to some vertex $w^j$ for $j=\{1, ..., 2^{i+1}-1\}$.
Then,
\begin{align*}
    \sum_{e\in\delta(w)} x_e & \leq \sum_{e\in \delta(w)} \sum_{e'\in E_e'} f'_{v_e}(e') & \\
    &\leq \sum_{e\in \delta(w)} \sum_{e'\in E_e'} z_{e'} & \text{ by Constraint \eqref{abtlpcons1} and } f'\leq f \\
    & \le \sum_{j=1}^{2^{i+1}-1} \sum_{e'\in \delta(w^j)\cap E_1} z_{e'} & \text{ by the subset observation}\\
    & \le \sum_{j=0}^{2^{i+1}-1} \left( \sum_{e'\in \delta^+(w^j)\cap E_1} z_{e'} + \sum_{e'\in \delta^-(w^{j+1})\cap E_1} z_{e'}\right)\\
    & \le \sum_{j=0}^{2^{i+1}-1} 1 & \text{ by Constraint \eqref{abtlpdeg}}
\end{align*}

Thus, it follows that $\sum_{e\in\delta(w)} x_e\le 2^{i+1}$. For any vertex $v\in V_i$, since the paths $\mathcal{P}'_v$ were chosen to preserve at least half of the flow in $G'$ for vertex $v$, it follows that $\sum_{P\in \mathcal{P}(v, r)} y_v(P) \ge 1/2$. This violates Constraint \eqref{poiselppath} but we will fix this at the end with scaling. For Constraint \eqref{poiselpdiam}, note that since $\mathcal{P}'_v$ is chosen such that the paths enters $v^s$ via $v^j$ where $j\le 2^{i+1}$, their corresponding paths $P$ has length at most $2^{i+1}$. Then, $\sum_{P\in \mathcal{P}(v, r)} len(P)y_v(P) \le 2^{i+1}$ as long as  Constraint \eqref{poiselppath} holds (which we will guarantee in the end). Constraints \eqref{poiselpcons1} and \eqref{poiselpcons2} are satisfied by design. Lastly, simply scale the variables $y_v(P)$ proportionally by $1/\sum_{P\in \mathcal{P}(v, r)} y_v(P)$. Since $\sum_{P\in \mathcal{P}(v, r)} y_v(P) \ge 1/2$, doubling $x_e$ suffices to satisfy Constraint \eqref{poiselpcons1}. Therefore, setting $L_1 = 2^{i+2}, L_2 = 2^{i+1}$ provides a feasible solution with objective value $6\cdot 2^i$. 
\end{proof}

\begin{proof}[Proof of Theorem \ref{th:abtapprox}]
Using the above lemma, for each $V_i$, we can find a solution to the POISELP for $\{r\} \cup V_i$ of value at most $6 \cdot 2^i$. Then we apply Lemma~\ref{lem:latin18} followed by Lemma~\ref{lem:Ravi1994Broadcast} to round this into a schedule that completes in time $O(2^i \frac{\log^2 n}{\log \log n})$, thus incurring a total completion time of $O(|V_i| \cdot 2^i \frac{\log^2 n}{\log \log n})$ for the nodes in $V_i$. 
Repeating this for the sets $V_i$ for $i \in [\log n]$ in that order, the completion time for any node in $V_i$ is $\sum_{j=1}^i O(2^j \frac{\log^2 n}{\log \log n}) = O(2^i \frac{\log^2 n}{\log \log n})$. Summing over all nodes gives a total completion time of $O(\sum_i |V_i| 2^i \frac{\log^2 n}{\log \log n})$. 
Since the value of ABTLP is at least $\sum_i |V_i| \cdot 2^{i-1}$, we conclude that the ABT schedule has the approximation factor claimed in Theorem~\ref{th:abtapprox}. 
\end{proof}

Note that the method generalized in a straightforward manner for the average multicast time problem where we only require the message to reach a subset of the nodes denoted as terminals. 
\section{AvgRVC on Trees}
\label{sec:tree}
Section~\ref{sec:AvgRVC} showed that any $\alpha$-approximation algorithm for ABT gives an $O(\alpha \log n)$-approximation algorithm for AvgRVC. On one hand, this strategy implies that if one finds a poly-logarithmic approximation to ABT, one also attains a poly-logarithmic approximation for AvgRVC. On the other hand, this strategy would never lead to any approximation better than logarithmic.
Indeed, in the introduction, we showed an example of a class of trees in which there is a $\Theta(\log n)$-factor separation in the optimal values to ABT and AvgRVC. 
Motivated by this, we study AvgRVC on trees, and 
break this logarithmic barrier that is an artefact of going via the ABT solution to obtain a constant-factor approximation for AvgRVC. Thus, the goal of this section is to prove Theorem \ref{treeapprox}. 


\subsection{Overview of Techniques}

\paragraph{Locally-periodic with regular offsets.} In particular, we show that there exists a near-optimal schedule that is locally-periodic where each vertex receives information from its parent in regular intervals. 
Denote the interval as the period for the vertex.
In optimizing for the AvgRVC problem where the root of the tree is the source of information (with information update period of one), the periods are monotonically increasing on any root to leaf path (since more frequent information update at a node than its parent is wasteful).
Note that even with this notion of period, there is still an offset due to the asynchronicity from when a parent is updated to when it informs the child, which in turn adds to the delay of all the children in its subtree. 
Thus an upper bound on the latency of reaching a node is the sum of these offsets of its ancestors plus the node's own period of hearing from its own parent. 
To simplify the analysis, we only consider such locally-periodic schedules where the offsets at every node (from its parent's update) is the same every time this node is updated by its parent. We call these schedules `regular' and our main result is to show that such a regular schedule is near-optimal.

\paragraph{Designing local periods using LP and DP.} To do this, we first show that there is a locally-periodic schedule for which the sum of the periods at all the nodes is at most 4 times that of an optimal AvgRVC schedule. 
Given this, we design an algorithm to assign periods to nodes in the tree that obeys the telephone constraints and minimizes the total sum of these periods at the nodes and obeys the natural monotonicity requirement on root-to-leaf paths.
Constructing such an assignment of periods is complicated by the telephone condition where a parent can only inform one child at a given time. This corresponds to the constraint that the inverse of the periods of the children of any node should sum to no more than 1. While a dynamic programming approach is natural for computing such period assignments in a tree, at each node in the tree, the telephone condition is best handled using a matching-based linear program that assigns the right power of two period for each child in its neighborhood going top down. The LP solution at each node can be shown to have at most two fractional variables which can then be rounded and accounted for by scaling.

\paragraph{Designing regular offsets using ABT.} Armed with such a locally-periodic period assignment at the nodes, we next turn this into a regular schedule.
To obtain the offsets, we turn to the optimal ordering of the children at a node in the ABT schedule (according to non-increasing size of the subtrees). 
By using a time averaging argument, we can see that the optimal value of the ABT problem is a lower bound on the optimal average rooted latency.
When calculating the latency contribution of a node, recall that we sum the offsets of its ancestors and its own period. To bound the offsets, note that the offset is at most the period of the parent and hence if the parent's period is halving, the offset is also halving and leads to a geometrically decreasing contribution. Hence the only worrisome offset contributions from ancestors occur when their periods do not reduce (halve) along the path to the root. 
Our last idea is to insist that whenever a pair of ancestors have the same period, the ordering of the children of the higher node must follow the optimal ABT order. By showing how to accomplish this in our locally-periodic schedule, we can bound such worrisome offset contributions to their corresponding terms in the ABT objective which itself is upper bounded by the optimal rooted latency value. 

Put together, our final locally-periodic powers-of-two schedules that also obey the ABT ordering at node neighborhoods where the periods do not change gives near optimal AvgRVC schedules. Details follow.


\subsection{Preliminaries}
Consider an instance of the AvgRVC on a tree $T$ with root $r$. Note that we can view the tree as a directed arborescence since fresh information can only be disseminated downwards from the root towards the leaves. 
\begin{definition}
Given an infinite schedule $\mathcal{S}$ with latency vectors $l^t$, for any vertex $v$, let $t^v_1, t^v_2, ...$ be its \textbf{update sequence}, representing the sequence of time at which $v$ hears a fresh information from its parent. In other words, time $t$ is in the update sequence of $v$ if $l^t_v\le l^{t-1}_v$. Note that without any update at $t$, $l^t_v = l^{t-1}_v +1$.
\end{definition}
 
\begin{definition}
We say a schedule is \textbf{locally periodic} if for every vertex $v$, $t^v_{i+1}-t^v_i=p^v$ for all $i\in \mathbb{N}$ where $\{t^v_i\}_{i\in \mathbb{N}}$ is the update sequence of $v$. We call $p^v$ the \textbf{period of $v$}.
\end{definition}
 Note that in a locally periodic schedule, if $u$ is the parent of $v$, then $p_u\le p_v$ since $v$ cannot hear fresh information more frequently than $u$. Since the root is constantly generating new information, $p^r=1$. 
\begin{definition}
Given a locally periodic schedule, consider a vertex $v$ and its parent $u$. Let $t^v$ be a time in the update sequence of $v$. Let $t^u$ be the largest time in the update sequence of $u$ such that $t^u<t^v$. Then, define $t^v-t^u$ to be the \textbf{offset} of $v$ at time $t^v$. 
\end{definition}
The offset represents the delay of information caused by the asynchronicity of the periods between a vertex and its parent. It follows from definition that the offset of $v$ at any time is at most $p^u$, the period of its parent. 
\begin{definition}
We say a locally periodic schedule is \textbf{regular} if for any vertex $v$, the offset of $v$ at time $t$ is the same for all time $t$ in the update sequence of $v$.
In a regular schedule, let $o^v$ denote the \textbf{offset for vertex $v$}.
\end{definition}
 Note that a locally periodic schedule does not imply a regular schedule. For example, if $v$ has period $4$ and its parent $u$ has a period of $3$, then $v$ can have offsets of $1, 2, 3$ or $4$ at different times in the schedule. 
 Consider a vertex $v$ and its parent $u$. Note that if $p^v$ is not a multiple of $p^u$, then the offset of $v$ varies over time. Thus another condition for a regular schedule is that $p^u|p^v$. 
 
 At a high level, the offsets in a regular schedule is a way of ordering the children of each internal node $u$ in the tree to determine in what order $u$ should inform its children after every time $u$ hears fresh information from the root. 
 We are ready to state an expression upper bounding the average latency in a tree using these definitions. 
\begin{lemma}
\label{lem:reglat}
Consider a regular schedule where vertex $v$ has a period of $p^v$ and an offset of $o^v$. Then, the maximum over time average latency of this schedule is at most $\frac{1}{n}(\sum_{v\in V(G)} \sum_{u\in V(P_v)} o^u +\sum_{v\in V(G)} p^v)$ where $P_v$ is the path from the root to $v$.  
\end{lemma}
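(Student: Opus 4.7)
The plan is to bound the latency at each vertex $v$ at any time $t$ by a quantity that splits cleanly into a ``period'' term $p^v$ and a ``path-offset'' term $\sum_{u \in V(P_v)} o^u$, and then to sum over all vertices. The core observation is the defining feature of a regular schedule: whenever $v$ receives an update, its new view is exactly its parent's view from $o^v$ steps earlier, independent of which update in the update sequence we look at.

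Fix a time $t$ and a non-root vertex $v$, and let $t^v \le t$ be the most recent update time of $v$. Since $\phi_v$ does not change between consecutive updates, $l_v^t = (t - t^v) + l_v^{t^v}$, and since successive updates of $v$ are spaced exactly $p^v$ apart, $t - t^v \le p^v$. For the second summand, let $u$ be the parent of $v$ and let $t^u$ be the largest update time of $u$ with $t^u < t^v$; by the regularity hypothesis $t^u = t^v - o^v$, and the exchange at time $t^v$ synchronizes $\phi_v^{t^v} = \phi_u^{t^u}$, so
\[
l_v^{t^v} \;=\; t^v - \phi_u^{t^u} \;=\; o^v + l_u^{t^u}.
\]
Iterating this identity up the path $P_v = r, u_1, \ldots, u_k = v$ and using that $l_r \equiv 0$ (with the convention $o^r := 0$), the right-hand side telescopes to $\sum_{u \in V(P_v)} o^u$. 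This gives $l_v^t \le p^v + \sum_{u \in V(P_v)} o^u$, valid for every $t$ for which the full chain of prior updates has already occurred; for earlier $t$ (before $v$ has entered its periodic regime), the even simpler bound $l_v^t = t$ holds, and $t$ is at most the time at which $v$ first hears from the root, which is bounded by the same path-offset sum.

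Summing the inequality over all $v \in V(G)$, dividing by $n$, and taking the maximum over $t$ immediately yields the claimed bound. The argument is essentially an unwinding of definitions; the one step to verify with a little care is that the telescoping truly terminates at the root (since $\phi_r^s = s$ for every $s$, the root contributes zero latency no matter which past update time we arrive at), so the sum reduces exactly to the path offsets plus the single $p^v$ slack per vertex. I do not expect any genuine obstacle here, since once the recursion is set up the inequality is essentially tight by construction.
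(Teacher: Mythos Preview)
Your proposal is correct and follows essentially the same approach as the paper: both arguments observe that in a regular schedule the latency of $v$ at an update time is exactly the sum of offsets along $P_v$, and then add at most $p^v$ for the time elapsed since the last update. You spell out the telescoping recursion $l_v^{t^v}=o^v+l_u^{t^u}$ more explicitly than the paper (which simply asserts the identity $l_v^t=\sum_i o^{u_i}+(t-t')$), and you additionally remark on the pre-periodic phase, but the substance is identical.
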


We will show that any optimal schedule can be converted into a locally periodic one where the part of the objective depending on the periods is at most a factor of 4 worse than the average rooted latency (Lemma~\ref{lem:makeperiodic}) in Section~\ref{subsec:period}. This leads us to a solution to the first part of finding a locally periodic schedule where the periods can be bounded (Corollary~\ref{col:bound2}).
Next, we use this periodic schedule and develop a subproblem of making the schedule regular by assigning offsets in Section~\ref{subsec:offset}. In Section~\ref{sec:regular}, we relate the optima of AvgRVC and ABT and use
a solution to the ABT problem in the tree to assign these offsets in constructing a regular schedule whose offsets can be bounded. Section~\ref{subsec:together} puts everything together to prove Theorem~\ref{treeapprox}.

\subsection{Simple Facts about Sequences of Positive Powers of Two}
We collect here some facts that we will use later.
\begin{claim}
\label{cl:largetail}
    Let $p_1, ..., p_k$ be a non-decreasing sequence of positive powers of $2$ where $\sum_{i=1}^k \frac{1}{p_i} \le 1$. Let $s = 1-\sum_{i=1}^k \frac{1}{p_i}$. Then, the following holds:
    \begin{itemize}
        \item if $s> 0$, then $s \ge \frac{1}{p_k}$ and,
        \item $\frac{1}{p_k} + s \ge \frac{1}{2^k}$. 
    \end{itemize}
\end{claim}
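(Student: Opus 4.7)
The plan is to derive both statements from basic properties of dyadic sums, exploiting the fact that each $p_i$ is a power of $2$ and hence divides $p_k$.

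For the first bullet, since $p_1 \le \cdots \le p_k$ are powers of $2$, every $p_i$ divides $p_k$, so $1/p_i = (p_k/p_i)/p_k$ is an integer multiple of $1/p_k$. Hence $\sum_{i=1}^k 1/p_i = N/p_k$ for some non-negative integer $N$, and $s = (p_k - N)/p_k$. When $s > 0$ we have $p_k - N \ge 1$, giving $s \ge 1/p_k$ immediately.

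For the second bullet, I plan to first establish a sub-lemma: if a dyadic rational $v \in (0,1]$ is expressible as a sum of $m$ (possibly repeating) reciprocals of powers of $2$, then the binary expansion of $v$ has at most $m$ ones. The argument will use the merge operation $\tfrac{1}{2^a} + \tfrac{1}{2^a} \rightsquigarrow \tfrac{1}{2^{a-1}}$, which preserves the sum while decreasing the multiset size by one; iterating it produces a multiset with pairwise distinct exponents, which must be the unique binary expansion of $v$.

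With this sub-lemma in hand, I split into two cases. If $s > 0$, then $\sum_{i=1}^k 1/p_i$ is strictly less than $1$, and by the sub-lemma its binary expansion has at most $k$ ones; the largest dyadic rational in $[0,1)$ with at most $k$ binary ones is $\tfrac{1}{2} + \tfrac{1}{4} + \cdots + \tfrac{1}{2^k} = 1 - \tfrac{1}{2^k}$, so $s \ge 1/2^k$ and the claim follows. If $s = 0$, then $\sum_{i=1}^k 1/p_i = 1$, whose unique binary expansion is the single-element multiset $\{1\}$. Reversing the merge process, any $k$-term representation of $1$ is obtained by performing $k-1$ unmerges starting from $\{1\}$, and since an unmerge at most doubles the current maximum denominator, we get $p_k \le 2^{k-1}$ and therefore $1/p_k \ge 1/2^{k-1} \ge 1/2^k$.

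The main technical obstacle I anticipate is proving the sub-lemma cleanly, in particular verifying that exhaustive merging terminates in the unique binary expansion; working with multisets of exponents (rather than the original non-decreasing sequence) sidesteps any concerns about preserving the order through merges.
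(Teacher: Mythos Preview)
Your argument is correct. For the first bullet, your proof is essentially identical to the paper's: both observe that $p_k \sum_i 1/p_i$ is an integer, hence $p_k s$ is a non-negative integer that must be at least $1$ when $s > 0$.

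For the second bullet, however, the paper takes a much shorter inductive route. Setting $s' = 1/p_k + s$ (which is the slack of the truncated sequence $p_1,\dots,p_{k-1}$), the inductive hypothesis gives $1/p_{k-1} + s' \ge 1/2^{k-1}$, and since $s' > 0$, the first bullet applied to the shorter sequence yields $s' \ge 1/p_{k-1}$; combining, $2s' \ge 1/p_{k-1} + s' \ge 1/2^{k-1}$, so $s' \ge 1/2^k$. Your approach via the merge/unmerge characterisation of binary expansions is genuinely different and yields a bit more structure (in the $s=0$ case you actually obtain the sharper fact $p_k \le 2^{k-1}$). The tradeoff is that your argument needs the auxiliary sub-lemma and a case split, whereas the paper's induction is a three-line consequence of the first bullet and avoids any discussion of binary expansions.
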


\begin{proof}
    To prove the first statement, since $p_k$ is the largest power of $2$ in the sequence, $p_k/p_i$ is a positive integer for all $1\le i\le k$. Then, it follows that $p_ks = p_k(1-\sum_{i=1}^k 1/p_i)$ is also a non-negative integer. Then $p_ks\ge 1$ whenever $s>0$ as desired.
    
    We prove the second statement by induction. If $k=1$, then $1/p_k +s = 1\ge 1/2$, as required. For $k\ge 2$, let $s' = 1/p_k +s$. By induction, it follows that $1/p_{k-1} + s' \ge 1/2^{k-1}$. By our previous claim, $s'\ge 1/p_{k-1}$. Then, $s'\ge \frac{1}{2}(1/p_{k-1}+s') \ge 1/2^k$, as required. 
\end{proof}
\begin{claim}
\label{cl:partition}
    Let $p_1, ..., p_k$ be a non-decreasing sequence of powers of $2$ where $k\ge 2$ and $\sum_{i=1}^k \frac{1}{p^i}\le 1$. Then, either $\sum_{i=1}^k\frac{1}{p^i} \le \frac{1}{2}$ or there exists $1\le j< k$ such that $\sum_{i=1}^j \frac{1}{p^i}= \frac{1}{2}$.
\end{claim}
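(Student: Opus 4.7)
The plan is to show that the only way a prefix sum can first cross the threshold $\tfrac12$ is by hitting it exactly, which uses the divisibility structure of powers of two.

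First I would dispose of the trivial case. If $\sum_{i=1}^{k} 1/p_i \le 1/2$, the first alternative of the claim holds and there is nothing to prove. So assume $S_k := \sum_{i=1}^{k} 1/p_i > 1/2$. I would also observe up front that $p_1 \ge 2$: otherwise $p_1 = 1$ would force $S_k \ge 1/p_1 = 1$, and together with the hypothesis $S_k \le 1$ this gives $S_k = 1$, which forces $k = 1$ (since all $1/p_i > 0$), contradicting $k \ge 2$.

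Next I would let $j$ be the smallest index with $S_j \ge 1/2$; such a $j$ exists because $S_k > 1/2$, and by minimality $S_{j-1} < 1/2$ (with the convention $S_0 = 0$). The key structural observation is that, because the sequence $p_1,\ldots,p_k$ is non-decreasing and each $p_i$ is a power of two, we have $p_i \mid p_j$ for every $i \le j$. Consequently $p_j \cdot S_{j-1} = \sum_{i=1}^{j-1} p_j/p_i$ is a non-negative integer. Since $p_j \ge p_1 \ge 2$ is an even integer, $p_j/2$ is also a positive integer. The strict inequality $p_j \cdot S_{j-1} < p_j/2$ between two integers therefore yields $p_j \cdot S_{j-1} \le p_j/2 - 1$, i.e.\ $S_{j-1} \le 1/2 - 1/p_j$. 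Adding $1/p_j$ gives $S_j \le 1/2$, and combined with $S_j \ge 1/2$ this forces $S_j = 1/2$.

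Finally I would verify $j < k$: if $j = k$ we would have $S_k = 1/2$, contradicting the standing assumption $S_k > 1/2$. So $j$ is a valid index in $\{1,\ldots,k-1\}$ with $\sum_{i=1}^{j} 1/p_i = 1/2$, completing the proof. There is no real obstacle here; the only point that needs care is the divisibility argument that promotes the strict inequality $S_{j-1} < 1/2$ to $S_{j-1} \le 1/2 - 1/p_j$, which is exactly what forces the prefix sum to land on $1/2$ rather than overshoot it.
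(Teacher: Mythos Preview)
Your proof is correct. The approach differs from the paper's in a small but pleasant way. The paper picks the \emph{largest} index $j$ with $S_j \le \tfrac12$ and then invokes the earlier Claim~\ref{cl:largetail} (applied after doubling) to argue that if the slack $s = \tfrac12 - S_j$ is positive then $s \ge 1/p_j \ge 1/p_{j+1}$, contradicting maximality of $j$ unless $j = k$. You instead pick the \emph{smallest} index $j$ with $S_j \ge \tfrac12$ and use the divisibility $p_i \mid p_j$ for $i \le j$ directly to promote $S_{j-1} < \tfrac12$ to $S_{j-1} \le \tfrac12 - 1/p_j$, forcing $S_j = \tfrac12$ exactly. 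Your route is self-contained (it does not call on Claim~\ref{cl:largetail}) and slightly more direct; the paper's route has the advantage of reusing an already-proved fact. Both arguments are short and rest on the same underlying integrality observation.
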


\begin{proof}
    Pick the largest index $j$ such that $\sum_{i=1}^j\frac{1}{p^i} \le \frac{1}{2}$. Since $k\ge 2$, $1/p_i\le 1/2$ for all $1\le i\le k$ and thus such $j$ exists. Let $s= 1/2 - \sum_{i=1}^j\frac{1}{p^i} $. If $s=0$, then our claim follows immediately. If $s> 0$, by doubling the values in our sequence, it follows from Claim \ref{cl:largetail} that $2s\ge 2/p_j$. If $j\neq k$, then $1/p_k\le 1/p_j\le s$, contradicting our choice of $j$. Therefore, $j=k$ and $\sum_{i=1}^k\frac{1}{p^i} \le \frac{1}{2}$, proving our claim.
\end{proof}


\subsection{Assigning Periods}
\label{subsec:period}

To bound the sum of periods, we first prove a lemma about general periodic schedules. 

\begin{lemma}
\label{lem:avetime}
    Given a periodic schedule with latency function $l$ and period $P$ and a vertex $v\in V(G)$, let $k$ be the number of times $v$ received fresh information from the root. Then $P^2/k\le 2\sum_{t=P}^{2P-1} l^t_v$ for all $t$. 
\end{lemma}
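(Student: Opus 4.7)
The plan is to exploit periodicity: since $l$ is periodic with period $P$ and $v$ experiences exactly $k$ updates per period, I will bound $\sum_{t=P}^{2P-1} l^t_v$ by partitioning the window $[P,2P-1]$ according to the update times of $v$. Let $s_1 < s_2 < \cdots < s_k$ be the update times of $v$ lying in $[P,2P-1]$, extend by periodicity via $s_{k+1} := s_1 + P$, and define the gap lengths $g_i := s_{i+1} - s_i$, so that $\sum_{i=1}^k g_i = P$. Let $a_i := l^{s_i}_v$ denote the latency at each update. Since $v \ne r$, every piece of information reaching $v$ has traveled across at least one edge from the parent, so $a_i \ge 1$.

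The key structural observation is that between two consecutive updates, the latency function grows by exactly one at each step (this is immediate from the vector-clock update rule, since an update strictly or weakly decreases the latency while otherwise $l^{t}_v = l^{t-1}_v + 1$). Thus inside the $i$-th gap the sequence $(l^t_v)_{t \in [s_i, s_{i+1})}$ is the arithmetic progression $a_i, a_i + 1, \ldots, a_i + g_i - 1$, contributing $g_i a_i + g_i(g_i - 1)/2$ to the sum. Summing over all $k$ gaps and using $a_i \ge 1$ together with $\sum_i g_i = P$,
\[
\sum_{t=P}^{2P-1} l^t_v \;=\; \sum_{i=1}^{k} g_i a_i \;+\; \frac{1}{2}\sum_{i=1}^{k} g_i(g_i - 1) \;\ge\; P \;+\; \frac{1}{2}\sum_{i=1}^{k} g_i^2 \;-\; \frac{P}{2}.
\]

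The final ingredient is Cauchy--Schwarz (equivalently, Jensen applied to $x \mapsto x^2$) on the gap lengths: $\sum_{i=1}^k g_i^2 \ge (\sum_i g_i)^2 / k = P^2/k$. Substituting yields
\[
\sum_{t=P}^{2P-1} l^t_v \;\ge\; P + \frac{P^2}{2k} - \frac{P}{2} \;=\; \frac{P^2}{2k} + \frac{P}{2} \;\ge\; \frac{P^2}{2k},
\]
and multiplying by $2$ gives the claim. (By periodicity, the same sum bound holds over any window of length $P$, which is presumably the intended reading of the trailing ``for all $t$''.) The only subtle point—and the main obstacle—is an accounting issue: if one uses only the trivial bound $a_i \ge 0$, Cauchy--Schwarz falls short of the target by exactly $P/2$, and this deficit must be closed by the tighter bound $a_i \ge 1$ available for $v \ne r$; the lemma is applied only to non-root vertices in what follows, so this hypothesis is harmless.
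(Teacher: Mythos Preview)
Your proof is correct and follows essentially the same approach as the paper: partition the period into gaps between consecutive updates, use that latency grows by one per step within each gap (with the latency at each update time being at least $1$ for non-root $v$), and then apply the Cauchy--Schwarz/convexity bound $\sum_i g_i^2 \ge P^2/k$. The paper's write-up is terser (it folds your $a_i \ge 1$ observation into the single inequality $\sum_{t=t_i}^{t_{i+1}-1} l^t_v \ge \sum_{j=1}^{g_i} j$), but the underlying argument is identical.
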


\begin{proof}
    Let $t_1, ..., t_k$ be the update sequence of $v$ between $P$ and $2P-1$. Since the schedule is periodic, the next time $v$ hears a new information is at $t_{k+1} = t_1+P$. Define $a_i=t_{i+1}-t_i$ for $1\le i\le k$. Essentially, $a_i$ represent the amount of time since $t_i$ $v$ has to wait before hearing new information again. Then, the latency at time $t_{i+1}-1$ is at least $a_i$. Since the schedule is periodic, we have that $\sum_{t=P}^{2P-1} l^t_v = \sum_{t=t_1}^{t_{k+1}-1} l^t_v = \sum_{i=1}^k \sum_{t=t_i}^{t_{i+1}-1} l^t_v \ge \sum_{i=1}^k \sum_{j=1}^{a_i} j \ge \sum_{i=1}^k a_i^2/2$. Since $\sum_{i=1}^k a_i = P$, one can check that $\sum_{i=1}^k a_i^2/2$ is minimized when $a_i=P/k$ for all $1\le i\le k$. Then our lemma follows immediately. 
\end{proof}



\begin{lemma}
\label{lem:makeperiodic}
    Let $opt_{VC}$ be the optimal value to AvgRVC. There exists a locally-periodic schedule $\mathcal{S'}$ with periods $q^v{}'$ such that $\sum_{v\in V(T)} q^v{}' \le (4n)opt_{VC}$. 
\end{lemma}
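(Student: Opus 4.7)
The plan is to extract from an optimal AvgRVC schedule a ``natural'' period $\bar q^v$ for each vertex $v$ that records how often $v$ is updated on average, round these periods up to powers of two, and then construct a locally periodic schedule realizing the rounded periods. Rounding to powers of two is essential both for obtaining integer periods and for making the later pinwheel-style construction tractable.

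Let $\mathcal{S}^*$ be an optimal AvgRVC schedule, so $\sum_v l^t_v \le n\cdot opt_{VC}$ at every time $t$. Fix a long window $[0,T)$ and let $K_v$ be the number of times $v$ is updated inside the window. Writing $a^v_1,\dots,a^v_{K_v-1}$ for the gaps between successive updates to $v$, the convexity step behind Lemma~\ref{lem:avetime} gives $\sum_{t=0}^{T-1} l^t_v \ge \tfrac12\sum_i (a^v_i)^2 \ge T^2/(2K_v)$, up to boundary corrections that become negligible once $T$ is large relative to $opt_{VC}$. Summing over $v$ and applying the average-latency bound yields $\sum_v T/K_v \le 2n\cdot opt_{VC}$, so the ``average periods'' $\bar q^v := T/K_v$ satisfy $\sum_v \bar q^v \le 2n\cdot opt_{VC}$.

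Define ${q^v}'$ to be the smallest power of two that is at least $\bar q^v$. Then ${q^v}' \le 2\bar q^v$, giving $\sum_v {q^v}' \le 4n\cdot opt_{VC}$, which is the bound claimed. To realize these as a locally periodic schedule, I use the bandwidth inequality at each internal node $u$ with children $c_1,\dots,c_d$: in $\mathcal{S}^*$ we have $K_u + \sum_i K_{c_i} \le T$, i.e., $1/\bar q^u + \sum_i 1/\bar q^{c_i} \le 1$, and this survives rounding up since $1/{q^v}' \le 1/\bar q^v$. Moreover, $K_{c_i}\le K_u$ implies $\bar q^u \le \bar q^{c_i}$, hence ${q^u}'$ divides ${q^{c_i}}'$ for every child. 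Thus at each internal node the periods are harmonic with reciprocal sum at most $1$, and Claims~\ref{cl:largetail} and~\ref{cl:partition} let me recursively partition slots and choose offsets $o^v$ so that the arithmetic progressions $\{o^v + k{q^v}'\}_{k\ge 0}$ are pairwise disjoint among $u$ and its children. Cascading this offset assignment from the root down the tree gives $\mathcal{S}'$.

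The main obstacle is the feasibility of $\mathcal{S}'$: at every internal node we must simultaneously pick offsets so that all the arithmetic progressions (the node's own receive slots plus one send-to-child progression per child) are pairwise disjoint. Rounding to powers of two converts this from a general pinwheel scheduling problem (NP-hard) into the harmonic case, where schedulability is tight exactly at $\sum 1/p_i \le 1$; Claim~\ref{cl:partition} is the combinatorial heart of the recursive offset choice. A secondary care is needed with the boundary terms in the convexity bound --- the partial first/last gap at each vertex and the baseline latencies at the start of the window --- but taking $T$ much larger than $opt_{VC}$ and the induced maximum period absorbs these corrections.
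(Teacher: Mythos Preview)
Your approach is correct and shares the same core with the paper's proof: extract an average period $\bar q^v$ from a good AvgRVC schedule via the convexity inequality behind Lemma~\ref{lem:avetime}, round up to the nearest power of two, and verify the resulting periods satisfy Conditions~\eqref{regparent} and~\eqref{regsum}. The differences are in the bookkeeping around that core.

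First, the paper does not work directly with the optimal schedule over a long window. It first invokes Theorem~\ref{periopt} to obtain a genuinely periodic schedule of period $P$ whose average latency is at most $2\,opt_{VC}$, and then applies Lemma~\ref{lem:avetime} over a single period $[P,2P)$. This cleanly eliminates the boundary corrections you flag (no partial first/last gap, no limit in $T$), at the price of an extra factor of two in the latency bound. Your long-window argument is sound, but making the ``$T\to\infty$ absorbs boundary terms'' step precise requires showing that every vertex is updated at least once (which follows since $l_v^t\le n\cdot opt_{VC}$ forces updates within bounded gaps) and then controlling the $+1$ in the gap count. In exchange you avoid Theorem~\ref{periopt} and recover the stated constant $4n$ more directly.

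Second, the paper's proof stops earlier than yours. After rounding, the paper simply checks that the ${q^v}'$ form a \emph{regular assignment of periods} (Conditions~\eqref{regparent}, \eqref{regsum}, \eqref{regpower2}) and declares the lemma proved; it does not actually build the locally periodic schedule here. The explicit schedule construction---choosing compatible offsets by a pinwheel-style recursion so that receive-from-parent and send-to-children progressions at each node are disjoint---is exactly what the paper defers to Lemma~\ref{lem:regsched} (where it also absorbs the receive/send conflict via the odd/even level doubling). So what you call the ``main obstacle'' is, in the paper's organization, a separate lemma; for the downstream use of Lemma~\ref{lem:makeperiodic} (namely Corollary~\ref{col:bound2}, which only needs a feasible witness for Problem~\ref{prob:regperi}), the regular assignment alone suffices. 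Your extra step is correct and essentially anticipates Lemma~\ref{lem:regsched}, including the observation that the stronger bandwidth inequality $1/{q^u}'+\sum_i 1/{q^{c_i}}'\le 1$ (rather than just Condition~\eqref{regsum}) is what lets you fit the parent's receive slot alongside the children's send slots at each internal node.
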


The idea is to spread out the occurrence of an edge in the schedule as uniformly as possible and argue that it does not increase the objective function by too much. 

\begin{proof}
    By Theorem \ref{periopt}, there exists a periodic schedule $\mathcal{S}^*$ whose maximum over time average latency is at most $2opt_{VC}$. Let $P$ be the period of $\mathcal{S}^*$. Let $l^t_v$ be the latency of $v$ at time $t$ in $\mathcal{S}^*$. For every vertex $v\in V(T)$, let $q^v$ be the rational number obtained from dividing $P$ by the number of times $v$ received fresh information from its parent during the time $P$ and $2P-1$. It follows from Lemma \ref{lem:avetime} that $\sum_{v\in V(T)}\sum_{t=P}^{P-1} l^t_v \ge \sum_{v\in V(T)} Pq^v/2$. Note that $\sum_{t=P}^{2P-1} \sum_{v\in V(T)} l^t_v = Pn(opt_{VC})$. Then, it follows that $\sum_{v\in V(T)} q^v \le 2n(opt_{VC})$. 
    
    Since any vertex $v$ cannot receive fresh information more frequently than his parent $u$, $q^u\le q^v$ and $q^v$ satisfies Condition \eqref{regparent}. Since a parent $u$ can only inform one of its children at every time step, it follows that $\sum_{v\in N(u)} P/q^v \le P$. Then $q^v$ also satisfies Condition \eqref{regsum}. Then let $q^v{}'$ be the smallest power of $2$ such that $q^v{}'\ge q^v$. Note that $q^v{}'$ still satisfies Conditions \eqref{regparent} and \eqref{regsum}. 
    Then $q^v{}'$ is a regular assignment of periods. Note that $q^v{}' \le 2q^v$ and thus $\sum_{v\in V(T)} q^v{}' \le 4n(opt_{VC})$, proving our lemma.
\end{proof}


Given the above lemma, let us analyze some properties of a locally periodic schedule to determine how to assign periods to nodes in the tree. Consider a vertex $u$ in the tree and let $N(u)$ denote the set of its children. As mentioned before, $p^{v}\ge p^u$ since each child $v\in N(u)$ should naturally wait at least as long as $u$ before it hears new information. Let $P=\Pi_{v\in N(u)} p^{v}$. For each vertex $v\in N(u)$, in a period of $P$ time, $u$ would have informed $v$ exactly $P/p^{v}$ times. However, since at every time step, $u$ can inform at most one of its children, it follows that $\sum_{v\in N(u)} P/p^{v} \le P$. Therefore, the periods of any regular schedule satisfies the following two constraints:
    \begin{align}
        p^u & \le p^v &  \forall u\in V(G), v\in N(u) \label{regparent}\\
        \sum_{v\in N(u)} \frac{1}{p^v}  & \le 1 &  \forall u\in V(G) \label{regsum}
    \end{align}
Consider a vertex $v$ and its parent $u$. Recall that our condition for a locally periodic schedule to be regular is that $p^u|p^v$. We impose a slightly stronger condition that every period must be a power of $2$:
\begin{align}
    p^v & = 2^i \text{ for some } i\in \mathbb{Z}_+ & \forall v\in V(G)
    \label{regpower2}
\end{align}

\begin{definition}
We say $\{p^u\}_{u\in V(G)}$ is a \textbf{regular assignment of periods} if the assignment satisfies Conditions \eqref{regparent}, \eqref{regsum}, and \eqref{regpower2}.
\end{definition} 
Note that a regular assignment of periods is an intermediate subproblem in finding regular schedules that are near-optimal. We will use the periods from this subproblem to eventually assign offsets in Section~\ref{subsec:offset}.
Now, we are ready to formally state the problem of assigning periods to nodes.
\begin{problem}
\label{prob:regperi}
    Given a tree $T$, find a regular assignment of periods to its vertices that minimizes $\sum_{u\in V(G)} p^v$. 
\end{problem}

\begin{lemma}
\label{lem:regperi}
Given an instance of Problem \ref{prob:regperi} with optimal value $opt_{RP}$, we can obtain in poly-time a regular assignment of periods where $\sum_{u\in V(G)}p^u\le 2opt_{RP}$.
\end{lemma}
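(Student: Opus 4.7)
The plan is to formulate a polynomial-size LP relaxation of Problem~\ref{prob:regperi}, round its optimum to a regular assignment, and bound the rounding loss by a factor of two via Cauchy-Schwarz.

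First I would restrict attention to periods lying in $\mathcal{P}=\{2^i : 0\le i\le \lceil\log(2n^2)\rceil\}$. The assignment $p^r=1$ and $p^v=2n$ for every $v\neq r$ is regular---each internal node $u$ has $\sum_{v\in N(u)} 1/p^v\le (n-1)/(2n)<1$---with total cost at most $2n^2$, so $opt_{RP}\le 2n^2$ and without loss of generality no optimal period exceeds $2n^2$; thus $|\mathcal{P}|=O(\log n)$. For each vertex $v$ and $p\in\mathcal{P}$, introduce $y_{v,p}\in[0,1]$ (meant to indicate $p^v=p$) and set $x_v:=\sum_{p\in\mathcal{P}} y_{v,p}/p$. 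The LP minimizes $\sum_{v,p} p\cdot y_{v,p}$ subject to (a) $\sum_p y_{v,p}=1$ for all $v$, (b) $x_v\le x_u$ whenever $u=\mathrm{parent}(v)$, and (c) $\sum_{v\in N(u)} x_v\le 1$ for all $u$. Any regular integer assignment $\{p^v\}$ induces a feasible LP point of the same objective value, so the LP optimum $Z^*$ satisfies $Z^*\le opt_{RP}$.

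Given an LP-optimum $y^*$, let $\hat{p}^v$ be the smallest element of $\mathcal{P}$ with $\hat{p}^v\ge 1/x_v$, so $\hat{p}^v<2/x_v$ and $1/\hat{p}^v\le x_v$. Condition~\eqref{regsum} follows from $\sum_{v\in N(u)} 1/\hat{p}^v\le \sum_{v\in N(u)} x_v\le 1$; condition~\eqref{regpower2} is immediate; condition~\eqref{regparent} holds because $x_u\ge x_v$ gives $1/x_u\le 1/x_v$, and rounding up to the next power of $2$ preserves this order. For the cost, Cauchy-Schwarz applied to the probability distribution $y^*_{v,\cdot}$ yields
\[
\Big(\sum_p p\cdot y^*_{v,p}\Big)\Big(\sum_p y^*_{v,p}/p\Big)\ge\Big(\sum_p y^*_{v,p}\Big)^2=1,
\]
so $\sum_p p\cdot y^*_{v,p}\ge 1/x_v$. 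Summing over $v$ gives $\sum_v\hat{p}^v<2\sum_v 1/x_v\le 2Z^*\le 2\,opt_{RP}$, as required.

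The expected obstacle is choosing LP constraints that are simultaneously (i) linear, (ii) valid for every integer regular assignment, and (iii) preserved by the non-linear rounding $x_v\mapsto\hat{p}^v$. The key decision is to enforce monotonicity as $x_v\le x_u$ (the parent's expected reciprocal period dominates the child's) rather than directly on the $y$-variables: this is implied by the integer monotonicity $p^v\ge p^u$ since $p\mapsto 1/p$ is decreasing, and it is precisely the relation that survives rounding up to the nearest power of two. The Cauchy-Schwarz step is what lets the LP objective---expressed through the $y$-variables---dominate the rounded cost $\sum_v\hat{p}^v$, so the LP must be formulated over distributions on candidate periods rather than over single scalars $x_v$ per vertex.
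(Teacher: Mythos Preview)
Your argument is correct (modulo the cosmetic point that $2n$ need not be a power of~$2$; replace it by the next power of~$2$ above $n$ and the feasibility check and the bound $opt_{RP}\le 2n^2$ go through unchanged). The Cauchy--Schwarz step $(\sum_p p\,y^*_{v,p})(\sum_p y^*_{v,p}/p)\ge 1$ is exactly the inequality needed to compare the LP objective with $\sum_v 1/x_v$, and the rounding $\hat p^v=\min\{2^i:2^i\ge 1/x_v\}$ preserves all three regularity conditions as you check.

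Your route, however, is quite different from the paper's. The paper does not write a single global LP; instead it runs a dynamic program up the tree, and at each internal node $u$ solves a small bipartite LP whose variables encode which power of~$2$ each child of $u$ receives, with a knapsack-type budget constraint enforcing~\eqref{regsum} and matching constraints enforcing that every child is assigned. It then argues that a basic feasible solution of this local LP has at most two fractional coordinates, rounds those (which can inflate the left side of~\eqref{regsum} to at most~$2$), and finally doubles every period once globally to restore feasibility. The two approaches buy different things: the paper's DP exposes combinatorial structure (few fractional variables per node) that could be useful for refinements or for other objectives on the same tree; your global LP with Cauchy--Schwarz is shorter and more transparent, avoids the per-node extreme-point analysis entirely, and makes the factor~$2$ appear in a single rounding-to-powers-of-two step rather than through a violated-then-repaired budget.
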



The following is a claim about optimal solutions to the Problem \ref{prob:regperi}. 

\begin{claim}
Given an instance of Problem \ref{prob:regperi}, let $q^v$ be an optimal regular assignment of periods. Then $q^v\le 2^{n}$ for all $v\in V(G)$. 
\end{claim}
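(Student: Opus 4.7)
The plan is to bound the optimum $opt_{RP}$ by a polynomial in $n$ via an explicit feasible regular assignment, and then use the power-of-$2$ restriction on periods to sharpen this into the per-vertex bound $q^v \le 2^n$.

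First I would exhibit the uniform candidate $p^v := N$ for every $v \in V(G)$, where $N := 2^{\lceil \log_2 n \rceil}$ is the smallest power of $2$ that is at least $n$. Conditions~(\ref{regparent}) and~(\ref{regpower2}) are trivial, since all $p^v$ are equal and are powers of $2$. For Condition~(\ref{regsum}), each internal node $u$ satisfies $\sum_{v\in N(u)} 1/p^v = |N(u)|/N \le (n-1)/N \le 1$ using $N \ge n$. So this is a feasible regular assignment with objective value $nN$, giving $opt_{RP} \le nN$.

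Next, since each $q^v$ in the optimum is a power of $2$ (hence $q^v \ge 1$), any single $q^v$ is at most $\sum_w q^w = opt_{RP} \le nN$. Combined with $q^v$ being a power of $2$, this forces $q^v \le 2^{\lfloor \log_2(nN) \rfloor}$, so the claim reduces to the arithmetic inequality $nN < 2^{n+1}$. For $n \ge 5$ this follows from $n^2 < 2^n$, since $nN \le 2n^2$; for $n \in \{1,2,3,4\}$ one checks directly that $nN \in \{1,4,12,16\}$ lies strictly below $2^{n+1} \in \{4,8,16,32\}$.

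The one subtlety is the small-$n$ boundary: the loose global estimate $2n^2 < 2^{n+1}$ fails at $n = 3$ (where $2n^2 = 18 > 16$), so a brief case check is unavoidable. Beyond that, the argument is elementary, and the bound $q^v \le 2^n$ is exactly what the downstream dynamic program will need in order to keep the per-vertex search space over periods polynomial in size.
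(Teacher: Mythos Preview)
Your proof is correct but follows a different route from the paper. You upper-bound $opt_{RP}$ globally via the uniform feasible assignment $p^v \equiv N = 2^{\lceil \log_2 n\rceil}$, then use $q^v \le \sum_w q^w = opt_{RP} \le nN$ together with the power-of-$2$ constraint and the arithmetic check $nN < 2^{n+1}$. The paper instead argues by a local exchange: assuming some $q^v > 2^n$, it picks such a $v$ closest to the root with the largest period among its siblings, and shows that resetting $q^v$ down to $2^n$ preserves Conditions~(\ref{regparent})--(\ref{regpower2}); the only nontrivial check is Condition~(\ref{regsum}) at the parent, handled via Claim~\ref{cl:largetail} (the slack left by $|N(u)|-1$ many reciprocals of powers of~$2$ is at least $1/2^{|N(u)|}\ge 1/2^n$). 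The decreased objective then contradicts optimality. Your approach is more elementary in that it sidesteps Claim~\ref{cl:largetail}, at the cost of the small-$n$ arithmetic case check; the paper's swap argument avoids any arithmetic at small $n$ and yields the slightly stronger structural statement that every too-large period can be reduced in place. Either argument is adequate for the downstream dynamic program, which only needs the search space of exponents to be $\{0,\dots,n\}$.
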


\begin{proof}
Suppose for the sake of contradiction that there exists $v$ such that $q^v> 2^n$. Assume without loss of generality that $v$ is a vertex closest to the root that violates this claim. Note that $v\neq r$ since $q^r=1$. Let $u$ be the parent of $v$. We may also assume that $v$ has the largest period among $N(u)$. Consider changing $q^v$ to $q^v{}' = 2^n$ instead. By our choice of $v$, $1/q^u \ge 1/2^n = 1/q^v{}'$. Then, Condition \eqref{regparent} is not violated. Since $q^v{}' = 2^n < q^v$, Condition \eqref{regparent} also holds for any children of $v$. By Claim \ref{cl:largetail}, $1-\sum_{w\in N(u)} \frac{1}{q^w} + \frac{1}{q^v} \ge \frac{1}{2^{|N(u)}} \ge \frac{1}{2^n} = \frac{1}{q^v{}'}$. Then, it follows that $\sum_{w\in N(u)} \frac{1}{q^w}- \frac{1}{q^v} + \frac{1}{q^v{}'} \le 1$ and Condition \eqref{regsum} also remains not violated. Then, the modified assignment remains regular but it improves our objective, a contradiction to the optimality of the original choice.
\end{proof}

\begin{proof}[Proof of Lemma \ref{lem:regperi}]
We first use dynamic programming to compute an optimal regular assignment of periods that slightly violates Condition \eqref{regsum}. We make appropriate adjustments at the end to produce the desired approximation.

Given vertex $u\in V(G)$, let $T_u$ denote the  subtree rooted at $u$. Given integer $0\le i\le n$, let $f(u, i)$ denote the minimum value of $\sum_{v\in T_u} p^v$ such that $p^u=2^i$ and the set of assignments $p^v$ obeys Condition \eqref{regparent} and \eqref{regpower2} (but not necessarily \eqref{regsum}). If $u$ is a leaf, then $f(u, i)= 2^i$ for all $0\le i\le n$.

If $u$ is not the root, we use the following auxiliary complete bipartite graph $G'$ to compute $f(u, i)$. Let $N(u)$, the children of $u$ be on one side of the bipartition while the other side contains $n-i+1$ vertices labelled from $i$ to $n$ representing the potential powers of two that can be assigned as the period for these nodes. Every edge $vj$ has a weight $w(vj)$ of $f(v, j)$ and a cost $c(vj)$ of $1/2^j$. Consider the following LP:

\begin{align}
	\min \qquad & 2^i + \sum_{e\in E(G')}w(e)x_e\notag &\\
	\text{s.t.} \qquad &\sum_{e\in E(G')}c(e)x_e \le 1 & \label{cons:sum} \\
	& \sum_{j=i}^n x_{vj} \ge 1 & \forall v\in N(u)\notag \\
    & x_{uv} \geq 0 & \forall uv \in E(G') \notag
\end{align}

In this LP, $x_{vj}$ is a binary variable representing if $v$ should have a period of $2^j$ or not. The objective thus represents $f(u, i)$, the sum of all the periods of the vertices in $T_u$. The first constraint is to ensure Condition \eqref{regsum} holds. The second set of constraint are matching constraints, ensuring that every vertex in $N(u)$ gets an assignment of period. The last is simply a non-negativity constraint. 


Let $x^*$ be an optimal basic feasible solution to the above LP. We claim that there exist at most two fractional value of $x^*$ and all other values are either $0$ or $1$. Note that if there exists $e$ such that $x^*_e> 1$, then we can lower the value to $1$ and produce a feasible solution that does not increase the objective. Since all the $|E(G')|$ non-negativity constraints are independent, by elementary linear algebra, there are at least $|E(G')|$ tight constraints. Since there are in total $1+|N(u)|+|E(G')|$ constraints, it follows that at least $|E(G')|-(|N(u)|+1)$ of the non-negative constraints are tight. In other words, there are at most $|N(u)|+1$ many strictly positive $x^*$'s. Let $F\subseteq E(G')$ be the set of edges whose $x^*$ value is strictly positive (i.e., in the support of $x^*$). 


To satisfy the matching constraints, it follows that every vertex $v\in N(u)$  is incident to at least one edge in $F$. Then, by the Pigeonhole Principle, there are at most two edges $e, e'\in F$ that are incident to the same vertex $v$ while the rest of the edges form a matching. Thus, in order to satisfy the matching Constraints, it follows that the $x^*$ value is $1$ for  either all edges in $F$ or for all except exactly two edges.

Consider the following rounding procedure. If there exists two edges $vj, vj'$ with fractional $x^*$ value where $j<j'$, then take $x'_{vj}=1, x'_{vj'}=0$. Note that this does not increase the objective function and does not violate the matching constraints. Then it produces an assignment of periods $p^v$ (and subsequently for all its descendants) such that the sum of their periods is at most $f(u, i)$. By construction of $G'$, $p^v\ge p^u$ for all $v\in N(u)$ and thus Condition \eqref{regparent} is satisfied. Condition \eqref{regpower2} is also trivially satisfied. However, the rounding affects Constraint \eqref{cons:sum} and thus might violate Condition \eqref{regsum}. However, since $c(j)=1/2^j\le 1$, this rounding increases the Right-Hand-Side of Constraint \ref{cons:sum} by at most $1$. Thus we are guaranteed that $\sum_{v\in N(u)} 1/p^v \le 2$. 

Now, consider the assignment of periods achieved from rounding the solution associated to $f(r, 0)$. As discussed before, the assignment satisfies Conditions \eqref{regparent} and \eqref{regpower2} but not Condition \eqref{regsum}. However, by simply doubling the period of every vertex in this assignment, we achieve a regular assignment $p^v$ whose sum is at most $2f(r, 0)$.

Consider $q^v$, an optimal assignment of periods to the instance of Problem \ref{prob:regperi} where $opt$ is the optimal value. For any $u\in V(G)$, note that the periods of its children can be easily transformed into a feasible solution to the LP associated to $f(u, i)$ where $p^u=2^i$. Then it follows that $opt\ge f(r, 0)$ and thus our construction provides a $2$-approximation. 
\end{proof}

\begin{corollary}
\label{col:bound2}
    Let $opt_{VC}$ be the optimal value to AvgRVC. We can construct in polynomial time, a schedule $\mathcal{S}$ for which $\sum_{v\in V(T)} p^v \le (8n)opt_{VC}$. 
\end{corollary}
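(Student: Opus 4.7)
The plan is to combine the existence result in Lemma \ref{lem:makeperiodic} with the algorithmic approximation in Lemma \ref{lem:regperi}. Lemma \ref{lem:makeperiodic} shows that starting from a near-optimal AvgRVC schedule (obtained from Theorem \ref{periopt}), we can extract a regular assignment of periods $\{{q^v}'\}$ whose total sum is at most $(4n)\,opt_{VC}$. This is an existential statement about Problem \ref{prob:regperi}: it witnesses a feasible solution, so the optimal value $opt_{RP}$ of Problem \ref{prob:regperi} on the input tree $T$ satisfies
\[
opt_{RP} \;\le\; \sum_{v\in V(T)} {q^v}' \;\le\; (4n)\,opt_{VC}.
\]

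Next, I would invoke Lemma \ref{lem:regperi}, which provides a polynomial-time algorithm that computes a regular assignment of periods $\{p^v\}$ with $\sum_{v\in V(T)} p^v \le 2\,opt_{RP}$. Chaining the two inequalities yields
\[
\sum_{v\in V(T)} p^v \;\le\; 2\,opt_{RP} \;\le\; 2\cdot (4n)\,opt_{VC} \;=\; (8n)\,opt_{VC},
\]
which is exactly the bound claimed by the corollary. The schedule $\mathcal{S}$ is then the locally periodic schedule whose periods are these $p^v$; it is constructed explicitly from the dynamic program underlying Lemma \ref{lem:regperi} and is therefore produced in polynomial time.

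There is no real obstacle beyond correctly quoting the two lemmas. The only point to be careful about is that Lemma \ref{lem:makeperiodic} must produce a schedule satisfying all three regularity conditions \eqref{regparent}, \eqref{regsum}, \eqref{regpower2}, so that it is indeed a feasible solution to Problem \ref{prob:regperi}; this is already verified in the proof of that lemma by rounding $q^v$ up to the nearest power of two and noting that this preserves both the parent and summation constraints while only doubling the total sum. The offsets needed to turn this locally periodic schedule into a regular one are not addressed here and are the subject of Section \ref{subsec:offset}; the corollary only concerns the period-dependent part of the objective in Lemma \ref{lem:reglat}.
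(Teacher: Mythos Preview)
Your proposal is correct and follows essentially the same approach as the paper: use Lemma~\ref{lem:makeperiodic} to witness a feasible solution to Problem~\ref{prob:regperi} with total at most $(4n)\,opt_{VC}$, then apply the $2$-approximation of Lemma~\ref{lem:regperi} to obtain periods $p^v$ with $\sum_v p^v \le 2\,opt_{RP} \le (8n)\,opt_{VC}$. Your additional remarks about checking that ${q^v}'$ satisfies all three regularity conditions, and that offsets are deferred to Section~\ref{subsec:offset}, are accurate and consistent with the paper.
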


\begin{proof}
  From Lemma \ref{lem:makeperiodic}, there exists a locally periodic schedule with period $q^v{}'$ such that $\sum_{v\in V(T)} q^v{}' \le (4n)opt_{VC}$. Since $q^v{}'$ is a feasible solution to Problem \ref{prob:regperi}, it follows from Lemma \ref{lem:regperi} that for the schedule constructed using the lemma, $\sum_{v\in V(T)} p^v \le 2\sum_{v\in V(T)} q^v{}' \le 8n(opt_{VC})$, proving our corollary. 
\end{proof}

\subsection{Assigning Offsets}
\label{subsec:offset}
We proceed to turn the locally periodic schedule constructed above into a regular schedule where every nodes has a fixed offset. We first make some observation about offsets. Consider a regular schedule $\mathcal{S}$. Let $u$ be a non-leaf vertex. Recall that $o^v\le p^u$ where $v \in N(u)$ by the definition of offset. For $1\le i\le p^u$, let $O^u_i=\{v \in N(u): o^v=i\}$, denoting the set of children of $u$ with offset $i$. Fix $i$, let $P=\Pi_{v
\in O^u_i} p^v$. Within a period of $P$, a vertex $v\in O^u_i$ is informed exactly $P/p^v$ times. However, every $p^u$ steps has at most one vertex with an offset of $i$ and thus $\sum_{v\in O^u_i} P/p^v \le P/p^u$. Then, we have the following constraints:
\begin{align}
    1\le o^v &\le p^u & \forall v\in N(u) \label{offbound} \\
    \sum_{v\in O^u_i} \frac{1}{p^v} & \le \frac{1}{p^u} & \forall u\in V(G), 1\le i\le p^u \label{offsum}
\end{align}
\begin{definition}
Given a regular assignment of periods, we say $o^v$ is a \textbf{regular assignment of offsets with respect to $p$} if Conditions \eqref{offbound} and \eqref{offsum} hold. 
\end{definition}
Note that the conditions for the offsets of $N(u)$ do not depend on the offset of $u$. Thus, given a regular assignment of periods, to obtain a regular assignment of offsets, it suffices to fix a node $u$ and assign the offsets for its children and repeat the process for every possible choice of $u$ independent of the other internal nodes.
Recall that the offset is a way of ordering the children. Our goal in the end is to not only come up with a regular schedule with these periods and offsets but also show that the offsets follows some ideal ordering $\pi$. To this end, we prove the following lemma that will be useful in assigning these offsets later.

\begin{lemma}
\label{lem:off}
    Let $u\in V(G)$. Let $p^u$ and $p^v$ for $v\in N(u)$ be a regular assignment of periods. Let $\pi:N(u)\to [|N(u)|]$ be a fixed ordering of $N(u)$. Then, there exists a regular assignment of offsets with respect to $p$ that satisfies the following:
    \begin{equation}
        \text{if } p^v = p^u, \text{ then } o^v\le \pi(v). \label{offorder}
    \end{equation}
\end{lemma}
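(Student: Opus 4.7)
My plan is to split $N(u)$ into $S := \{v \in N(u) : p^v = p^u\}$ and its complement, and assign offsets to the two groups separately. Observe that any $v \in S$ satisfies $1/p^v = 1/p^u$, so by \eqref{offsum} it must be the sole child at whatever offset slot it receives; these are also exactly the children to which Condition \eqref{offorder} applies. The remaining children all have $p^v \ge 2p^u$, so their sizes $1/p^v$ are at most half a slot's capacity, and I will pack them into the leftover offsets via a greedy bin-packing argument that exploits the power-of-two structure of the periods.

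\textbf{Phase 1 (children with $p^v = p^u$).} Condition \eqref{regsum} gives $|S|/p^u + \sum_{v \in N(u) \setminus S} 1/p^v \le 1$, and in particular $k := |S| \le p^u$. Enumerate $S$ as $v_1, \dots, v_k$ with $\pi(v_1) < \pi(v_2) < \cdots < \pi(v_k)$ and set $o^{v_j} := j$. Because $\pi$ is injective into $[|N(u)|]$, the $j$-th smallest $\pi$-value among $S$ is at least $j$, so $\pi(v_j) \ge j$ and Condition \eqref{offorder} holds; Condition \eqref{offbound} holds because $1 \le j \le k \le p^u$; and $O^u_j = \{v_j\}$ meets \eqref{offsum} with equality.

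\textbf{Phase 2 (children with $p^v \ge 2p^u$) and the main obstacle.} Process these children in nondecreasing order of $p^v$ (equivalently, largest size first) and use first-fit across the available offsets $k+1, \dots, p^u$: drop each child into the first offset whose currently used capacity plus $1/p^v$ is at most $1/p^u$. The heart of the proof is showing this greedy never gets stuck. Suppose for contradiction it fails at some child $v^*$ with $p^{v^*} = 2^j p^u$. Then every available offset has used capacity strictly greater than $1/p^u - 1/p^{v^*}$. Because periods are powers of two and we process in nondecreasing $p^v$-order, every child placed before $v^*$ has period dividing $p^{v^*}$, so its size $1/p^v$ is an integer multiple of $1/p^{v^*}$, and therefore the used capacity at each offset is also an integer multiple of $1/p^{v^*}$; the only such multiple lying in $(1/p^u - 1/p^{v^*},\, 1/p^u]$ is $1/p^u$ itself. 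Hence every one of the $p^u - k$ available offsets is completely full, accounting for a total placed mass of exactly $(p^u - k)/p^u$. On the other hand, the total demand from $N(u) \setminus S$ is at most $1 - k/p^u = (p^u - k)/p^u$ by \eqref{regsum}, and since $v^*$ itself has positive size and has not yet been placed, the placed mass is strictly less than $(p^u - k)/p^u$, a contradiction. Thus first-fit succeeds, and together with Phase 1 yields a regular assignment of offsets satisfying \eqref{offbound}, \eqref{offsum}, and \eqref{offorder}.
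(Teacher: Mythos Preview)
Your argument is correct but follows a genuinely different route from the paper. The paper proves the lemma by induction on $|N(u)|+p^u$: it invokes Claim~\ref{cl:partition} to split $N(u)$ into two halves $A,B$ with $\sum_{v\in A}1/p^v=\tfrac12$, halves all periods, applies the inductive hypothesis to each half separately, and then offsets the $B$-offsets by $p^u/2$. Verifying Condition~\eqref{offorder} in the recombination step requires tracking how many $S$-vertices landed in $A$ versus $B$.

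Your approach instead handles the two populations $S=\{v:p^v=p^u\}$ and $N(u)\setminus S$ directly. Phase~1 is immediate, and the bin-packing argument in Phase~2 is clean: the key observation that all previously placed sizes are integer multiples of $1/p^{v^*}$ (because periods are powers of two and you process largest-size-first) forces any ``too full'' offset to be \emph{exactly} full, which yields the mass contradiction. This is more elementary and explicitly constructive; it avoids induction entirely and makes the resulting offsets easy to describe. The paper's recursive bipartition, by contrast, mirrors the structure used later in Lemma~\ref{lem:repeat} and so keeps the two constructions parallel, but at the cost of a somewhat fussier verification of~\eqref{offorder}. Both approaches ultimately exploit the same power-of-two divisibility; yours just cashes it out as a packing invariant rather than a halving step.
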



\begin{proof}
We prove by induction on $|N(u)|+p^u$. If $|N(u)|=1$, then the lemma is trivially true. Assume $|N(u)|\ge 2$

If $\sum_{v\in N(u)} 1/p^v > 1/2$, by Claim \ref{cl:partition}, there exists a bipartition $(A, B)$ such that $A\cup B = N(u)$ and $\sum_{v\in A} 1/p^v = 1/2$. Let $v\in A, w\in B$. Since Claim \ref{cl:partition} achieves the partition by arranging the periods in non-decreasing order, we may assume that $p^v\le p^w$. If $p^v=p^w$, we may assume without loss of generality that $\pi(v)<\pi(w)$.

Let $p^v_A = p^v/2$ for every $v\in A$, $p^w_B=p^w/2$ for every $w\in B$ and let $p^u_A=p^u_B=p^u/2$. Since $\sum_{v\in A} 1/p^v =1/2$ and $\sum_{w\in B} 1/p^w \le 1/2$, it follows that $p_A, p_B$ satisfy Condition \eqref{regsum}. It is also clear that Conditions \eqref{regparent} and \eqref{regpower2} remain true and thus $p_A, p_B$ are regular assignment of periods. Let $\pi_A, \pi_B$ be the ordering of the vertices in $A, B$ respectively that respects the order in $\pi$. In other words, $\pi_A(v)<\pi_A(v')$ if and only if $\pi(v)<\pi(v')$ (similarly for any two vertices in $B$). By induction, there exists assignment of offsets $o_A$ and $o_B$ that is regular and satisfies Condition \eqref{offorder} with respect to $p_A$ and $p_B$ respectively. Then, let $o^v= o^v_A$ for every $v\in A$ and let $o^w= o^w_B+p^u/2$ for every $w\in B$. We now prove that $o$ is the desired offset assignment. 

Let $v\in A, w\in B$. Note that $1\le o^v=o^v_A\le p^u_A=p^v/2 \le p^u$ and $1\le o^w=o^w_B+p^u/2 \le p^u/2 +p^u/2 = p^u$. Then, the offset assignment $o$ satisfies Condition \eqref{offbound} with respect to $p$. If $1\le i\le p^u/2$, then $\sum_{v\in O^u_i} 1/p^v = \sum_{v: o^v_A = i} 2/p^v_A \le 2/p^u_A = p^u$. If $p^u/2 < i\le p^u$, then $\sum_{w\in O^u_i} 1/p^w = \sum_{w: o^v_B = i-p^u/2} 2/p^w_B\le 2/p^u_B = p^u$. Thus assignment $o$ satisfies Condition \eqref{offsum} and is regular with respect to $p$.

To check Condition \eqref{offorder}, suppose $v\in A$ and $p^v=p^u$. Then, $o^v=o^v_A\le \pi_A(v) \le \pi(v)$. If $w\in B$ and $p^w=p^u$, then for every $v\in A$, $p^v\ge p^w = p^u$. It follows from Condition \eqref{regparent} that $p^v=p^u$ for every $v\in A$. By Claim \ref{cl:partition}, $\sum_{v\in A} 1/p^v = 1/2$. This implies that $|A|= p^u/2$. Since $\pi_B$ respects the order of $\pi$ and all vertices in $A$ appears before $w$ in $\pi$, $\pi_B(w) \le \pi(w)-|A|$. Then, $o^w = o^w_B + p^u/2 \le \pi_B(w) +|A| \le \pi(w)$. Thus, the offset assignment $o$ satisfies the condition \eqref{offorder} with respect to $p$, as required.

Lastly, if $\sum_{v\in N(u)} 1/p^v \le 1/2$, let $p^v{}'=p^v/2$ and $p^u{}'=p^u/2$. Note that $p^v{}'$ is also a regular assignment of periods. Then, by induction, there exists an assignment of offsets $o^v$ that is regular with respect to $p^v{}'$. With similar arguments as before, it is easy to check that $o$ is also regular and satisfies Condition \eqref{offorder} with respect to $p$, proving our lemma.


\end{proof}


\subsection{Constructing a Regular Schedule}
\label{sec:regular}
Let $p^v$ be a regular assignment of periods that is a $2$-approximation to Problem \ref{prob:regperi} obtained from Lemma \ref{lem:regperi}. To assign the offsets, we need to construct an order $\pi$ of the children of any node $u$ in order to apply Lemma \ref{lem:off}.
\begin{lemma}
\label{lem:ABTopt}
    Given a tree $T$ with root $r$, let $T_v$ be the subtree rooted at $v$ for all $v\in V(T)$. The optimal schedule for ABT on $T$ is, for every vertex $u$, inform its children $N(u)$ in non-increasing order based on the size of $T_v$ for $v\in N(u)$. 
\end{lemma}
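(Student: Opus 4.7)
The plan is a standard adjacent-exchange argument preceded by two simple structural observations. First, I would argue that because $T$ is a tree, once a vertex $u$ first receives the root's information at time $t_u$, scheduling inside $T_u$ is fully decoupled from the rest of $T$: the only way information enters $T_u$ is through $u$, and the edges of $T_u$ are disjoint from those outside. Second, I would show that in any optimal schedule $u$ informs its $k = |N(u)|$ children at the consecutive times $t_u+1, t_u+2, \ldots, t_u+k$ in some order; idling, or delaying an informing step to a child $v$, would uniformly delay every node of $T_v$ and strictly increase $\sum_w d_w$. Hence an optimal schedule is specified at each internal node by a permutation of its children.

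The main step is the exchange itself. Suppose for contradiction that some optimal schedule $\mathcal{S}$ has a vertex $u$ at which two children $v_i, v_j \in N(u)$ are informed at consecutive times $t, t+1$ with $|T_{v_i}| < |T_{v_j}|$. I would build $\mathcal{S}'$ by swapping these two events: inform $v_j$ at time $t$ and $v_i$ at time $t+1$, then shift the internal schedule of $T_{v_i}$ by $+1$ and that of $T_{v_j}$ by $-1$. Since $T_{v_i}$ and $T_{v_j}$ share no vertices and no edges outside the tree join them, the shifted matchings remain valid and do not conflict with the unchanged matchings elsewhere. All delays outside $T_{v_i} \cup T_{v_j}$ are unchanged; each node in $T_{v_i}$ has its delay increased by exactly one, and each node in $T_{v_j}$ has its delay decreased by exactly one. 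The net change in $\sum_w d_w$ is $|T_{v_i}| - |T_{v_j}| < 0$, contradicting optimality.

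Iterating this exchange in bubble-sort fashion at every internal node produces an optimal schedule in which, at each $u$, the children are informed in non-increasing order of $|T_v|$, with arbitrary tie-breaking when two subtrees have equal size. The main obstacle is not the exchange argument itself, which is entirely routine, but rather verifying that the shifted internal schedules of $T_{v_i}$ and $T_{v_j}$ indeed remain valid telephone matchings; this reduces to the vertex-disjointness of the two subtrees in $T$, which is immediate from the tree structure.
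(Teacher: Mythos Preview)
Your proposal is correct and follows essentially the same exchange argument as the paper. The paper's version is slightly terser: it swaps any two misordered children $v,w$ directly (informed at times $t<t'$, not necessarily consecutive), yielding a net improvement of $(t'-t)(|T_w|-|T_v|)$, so your preliminary reduction to consecutive informing times and the bubble-sort iteration are unnecessary but harmless.
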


\begin{proof}
    Suppose for the sake of contradiction that in an optimal schedule, there exists a vertex $u$ that informs $v$ before $w$ but $|T_v|<|T_w|$. Let $t, t'$ be the time at which $v, w$ respectively hears the information from $u$ for the first time. Consider the schedule where $u$ informs $w$ at $t$ and informs $v$ at $t'$ instead. This swap causes every vertex in $T_w$ to receive information sooner by exactly $t'-t$ time steps while every vertex in $T_v$ is delayed by $t'-t$. Since $|T_w|>|T_v|$, this reduces the total delay, a contradiction. 
\end{proof}

\begin{lemma}
\label{lem:VCABT}
    Given a tree $T$ and a root $r$, let $opt_{VC}$ and $opt_{ABT}$ be the optimal value to AvgRVC and ABT respectively. Then, $2opt_{VC}\ge opt_{ABT}$. 
\end{lemma}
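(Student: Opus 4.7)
The plan is to mimic the proof of Theorem~\ref{avgdelaytolat}---taking an optimal AvgRVC schedule, extracting a ``receiving at $r$'' schedule via reversal, and then converting to a broadcast schedule---but to sharpen Lemma~\ref{lem:reverse} from a factor-$4$ to a factor-$2$ conversion by exploiting the tree structure.

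Let $\mathcal{R}^*$ be an optimal AvgRVC schedule on the tree $T$, so $\frac{1}{n}\sum_{v} l_v^t \le opt_{VC}$ for all $t \ge 0$. By the argument at the start of the proof of Theorem~\ref{avgdelaytolat}, $opt_{VC}$ is finite (bounded by $O(n^2)$), so there exists a time $t_0$ by which every vertex has been informed under $\mathcal{R}^*$. At this $t_0$, each latency $l_v^{t_0}$ is witnessed by a monotone sequence of matching uses along the unique $r$-to-$v$ tree path $P_v$ in the window $[t_0 - l_v^{t_0} + 1,\, t_0]$, and $\frac{1}{n}\sum_v l_v^{t_0} \le opt_{VC}$.

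First I would reverse the prefix $M_1, M_2, \ldots, M_{t_0}$ of $\mathcal{R}^*$ to obtain a finite schedule $\mathcal{S}$ of length $t_0$. As in the proof of Theorem~\ref{avgdelaytolat}, in $\mathcal{S}$ each witness is time-reversed into a monotone path from $v$ (at time $0$) back to $r$, so under $\mathcal{S}$ the root $r$ receives information from each $v$ by time $l_v^{t_0}$, with average receive time at most $opt_{VC}$.

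The crucial step is to convert $\mathcal{S}$ into a broadcast schedule in which each $v$ is reached from $r$ by time at most $2\,l_v^{t_0}$. On a general graph, Lemma~\ref{lem:reverse} does this with a factor of $4$ by concatenating reversed prefixes of doubling lengths; on a tree I would instead use the uniqueness of the $r$-to-$v$ path to avoid the doubling. Concretely, I would prove the following tree-specific strengthening: given any tree schedule of length $T_{\max}$ under which $r$ receives from each $v$ by time $t_v$, there exists a broadcast schedule under which each $v$ receives from $r$ by time at most $2\,t_v$. The construction proceeds inductively over the tree, using the fact that sibling subtrees are edge-disjoint to schedule their recursive broadcasts in parallel without matching conflicts; the factor $2$ arises because the reversed witness for $v$ must be ``aligned'' to start at the correct time at $r$ and executed in forward order, contributing at most $t_v$ for the alignment offset and at most $t_v$ for the forward traversal.

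Applying this tree version of the reversal to our $\mathcal{S}$, each $v$ is reached in the resulting broadcast schedule by time at most $2\,l_v^{t_0}$, and therefore
\[
opt_{ABT} \;\le\; \frac{1}{n}\sum_{v\in V(T)} 2\,l_v^{t_0} \;\le\; 2\,opt_{VC}.
\]

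The main obstacle is carrying out the inductive tree-broadcast construction with the sharp factor $2$, since a naive single reversal of a length-$T_{\max}$ schedule only guarantees broadcast time $\le T_{\max}$ for each $v$ (not $\le 2\,t_v$). The induction must carefully align the reversed witnesses across sibling subtrees so that subtrees with small $t_v$ finish quickly, without the doubling-based blow-up that costs Lemma~\ref{lem:reverse} its extra factor of $2$.
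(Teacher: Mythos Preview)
Your detour through a ``receiving at $r$'' schedule is unnecessary, and the gap you yourself flag in the final paragraph is real: you state a tree-specific sharpening of Lemma~\ref{lem:reverse} (factor $2$ instead of $4$) but never prove it. The inductive sketch (``siblings are edge-disjoint, so run them in parallel; the factor $2$ is alignment plus traversal'') does not explain how to align witnesses for different children of the same node, each with a different $t_v$, when $r$ can inform only one child per step. That scheduling problem is exactly where the doubling trick in Lemma~\ref{lem:reverse} spends its extra factor of $2$, and nothing in your sketch shows how to avoid it.

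The paper's proof sidesteps all of this by observing that no reversal is needed in the first place. The latency witness for $v$ at time $t$ is \emph{already} a forward-in-time sequence of matchings carrying information from $r$ to $v$ during the window $[t-l_v^t,\,t]$. Hence the suffix of the AvgRVC schedule consisting of its last $2^i$ steps is itself a broadcast schedule from $r$ that reaches every $v$ with $l_v^t\le 2^i$. Concatenating these suffixes for $i=0,1,\ldots$ (exactly the doubling idea, but applied directly rather than after reversal) yields a broadcast schedule in which $v$ is reached by time roughly $2\,l_v^t$, and summing gives $opt_{ABT}\le 2\,opt_{VC}$. Notice this argument does not actually use the tree structure at all; the tree hypothesis is incidental to the context in which the lemma is invoked. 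Your plan to exploit the tree structure to repair the reversal route is therefore working much harder than necessary.
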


\begin{proof}
    Let $\mathcal{R}$ be a schedule that is a witness to $opt_{VC}$. Let $t$ be a time such that the average latency at $t$ is $opt_{VC}$. This implies for every vertex $v\in V(T)$, the path from $r$ to $v$ appears in sequential order in $\mathcal{R}$ from time $t-l^t_v$ to $t$. Let $L$ be the largest latency value of a node at time $t$. Let $\mathcal{R}_i$ be the subschedule of $\mathcal{R}$ from time $t-2^i$ to $t$ for $0\le i\le I =\lceil \log L \rceil$. Let schedule $\mathcal{R}'$ be the concatenation of the subsechdules $\mathcal{R}_0,..., \mathcal{R}_I$. We claim that in schedule $\mathcal{R}'$, every vertex $v$ hears an information from the root before time $2l^t_v$. 
    
    Consider a vertex $v$. Let $0\le i\le I$ such that $\lceil \log l^t_v \rceil = i$. Then, in schedule $\mathcal{R}_i$, $v$ receives information from $r$. Since $|\mathcal{R}_j| =2^j$ for all $0\le j\le I$, it follows that the sequence in $\mathcal{R}_i$ can be completed in $\mathcal{R}'$ by time $2^{i+1}$. Since $2^{i+1}\le 2 l^t_v$, our lemma follows immediately. 
\end{proof}

Given vertex $u\in V(T)$ and $v\in N(u)$, let $T_v$ be the subtree rooted at $v$. Let $\pi^u:N(u)\to [|N(u)|]$ be an ordering of the children of $u$ in non-increasing order based on $|V(T_v)|$. 
The above two lemmas justify the reason we choose this particular $\pi_v$.  
Using this ordering $\pi^u$ and periods $p^u, \{p^v\}_{v\in N(u)}$, we apply Lemma \ref{lem:off} to obtain a regular assignment of offsets $o^v$. Then, we prove the following:
\begin{lemma}
\label{lem:regsched}
There exists a regular schedule $\mathcal{S}$ such that each vertex $v$ has a period of $2p^v$ and an offset of at most $2o^v$. 
\end{lemma}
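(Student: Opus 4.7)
The plan is to explicitly construct the schedule by assigning each non-root vertex $v$ a phase $\tilde\tau^v \in \mathbb{Z}_{\ge 0}$ such that the edge from $u := \operatorname{parent}(v)$ to $v$ is used precisely at times $t \equiv \tilde\tau^v \pmod{\tilde p^v}$, with $\tilde p^v := 2p^v$. I set $\tilde\tau^r = 0$ and, recursively, for each non-root $v$ with parent $u$,
\[
  \tilde\tau^v \;=\; \tilde\tau^u \;+\; (2o^v - 1) \;+\; j_v \cdot \tilde p^u,
\]
where, writing $p^v = p^u \cdot 2^{k_v}$ (permissible since $p^u \mid p^v$ and both are powers of two), $j_v \in \{0, 1, \ldots, 2^{k_v} - 1\}$ is a sub-offset chosen as described below. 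Then $v$'s update sequence has period $\tilde p^v$, and the most recent update of $u$ before any update of $v$ occurs exactly $2o^v - 1$ steps earlier (the intervening shift $j_v\tilde p^u + k\tilde p^v$ is a multiple of $\tilde p^u$), so the schedule is regular with $\tilde o^v = 2o^v - 1 \le 2o^v$.

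The sub-offsets $j_v$ are assigned to avoid collisions between siblings sharing the same offset. Fix an offset class $O^u_i$. Condition \eqref{offsum} becomes $\sum_{v \in O^u_i} 2^{-k_v} \le 1$, which is exactly Kraft's inequality. I therefore pick the $j_v$'s greedily: sort the vertices of $O^u_i$ by non-decreasing $k_v$ and, at each step, pick any residue class modulo $2^{k_v}$ not yet covered by a previously assigned sibling. Kraft's inequality (including the current $v$) guarantees strictly more than zero available classes remain at every step, so the construction succeeds and yields pairwise distinct residues, meaning $j_v \not\equiv j_{v'} \pmod{2^{\min(k_v, k_{v'})}}$ for every pair $v, v' \in O^u_i$.

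Finally, I verify that the resulting edge sets form a matching at each time by checking that at each vertex $u$ at most one incident edge is used at time $t$. The parent-edge of $u$ occupies residue $\tilde\tau^u \pmod{\tilde p^u}$, whereas a child-edge $uv$ occupies $\tilde\tau^u + (2o^v - 1) \pmod{\tilde p^u}$. Since $2o^v - 1$ is odd and lies in $\{1, 3, \ldots, 2p^u - 1\}$, it is never congruent to $0$ modulo $\tilde p^u = 2p^u$, so parent-edges never collide with child-edges. For two child-edges $uv, uv'$: if $o^v \ne o^{v'}$, their phases differ modulo $\tilde p^u$ by the nonzero value $2(o^v - o^{v'})$, preventing any common usage time; if $o^v = o^{v'}$, the greedy prefix-code construction ensures $j_v \not\equiv j_{v'} \pmod{2^{\min(k_v, k_{v'})}}$, which rules out any shared time since $\gcd(\tilde p^v, \tilde p^{v'}) = \tilde p^u \cdot 2^{\min(k_v, k_{v'})}$.

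The main obstacle is juggling the two conflict constraints at once, but the factor of $2$ in the doubled periods provides exactly one extra bit of scheduling room per level: the odd shift $2o^v - 1$ separates parent-edges from child-edges (odd vs.\ zero residue in $\tilde p^u$-windows), while Kraft's inequality packs children within each offset class. The root is handled as a boundary case: the formula yields $\tilde o^v = 1$ for children of $r$, consistent with the fact that $r$'s information is always fresh and so any update-time difference to $r$ equals one.
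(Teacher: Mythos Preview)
Your construction is correct and takes a genuinely different route from the paper's proof.

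The paper proceeds in two stages. First it builds an intermediate ``relaxed'' schedule with periods $p^v$ and offsets exactly $o^v$, in which a vertex is allowed to send and receive simultaneously (but still sends to at most one child). To pack siblings within an offset class $O^u_i$ it invokes the recursive Lemma~\ref{lem:repeat} (itself built on Claim~\ref{cl:partition}). Because each vertex has one incoming and one outgoing edge active at any time, the active edges at each step form vertex-disjoint paths; the paper then alternates odd-level and even-level edges to turn each step into two matchings, which is where the factor~$2$ on periods and offsets arises.

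You collapse these two stages into one. You spend the factor~$2$ up front by working directly at period $\tilde p^v=2p^v$, and you use the odd phase shift $2o^v-1$ to guarantee the parent-edge of $u$ (residue $0\bmod \tilde p^u$) never coincides with a child-edge (odd residue $\bmod\ \tilde p^u$). This is the local, phase-level analogue of the paper's global odd/even-level trick. For sibling separation within $O^u_i$ you use Kraft's inequality and a greedy prefix-code assignment of the sub-offsets $j_v$; this is equivalent in content to Lemma~\ref{lem:repeat} but more elementary and self-contained. Your argument even yields the slightly sharper offset $2o^v-1$ rather than $\le 2o^v$.

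Both approaches exploit the same two structural facts (condition~\eqref{offsum} is Kraft's inequality after dividing by $p^u$; one extra bit per level suffices to decouple parent- and child-edges), but yours packages them more directly and avoids the auxiliary lemma.
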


The idea is to construct the schedule top-bottom starting with the root. We show that at each node, if the periods and offsets of its children are known and obey the ``regular'' constraints, then a regular schedule can be built. 


\begin{lemma}
\label{lem:repeat}
Let $V$ be a set of vertices. For each $v\in V$, let $p^v$ be a power of $2$ such that $\sum_{v\in V} \frac{1}{p^v} = 1$. Then, there exists a regular sequence of the vertices in $V$ such that the period of vertex $v$ is $p^v$. 
\end{lemma}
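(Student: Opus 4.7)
The plan is to proceed by induction on $|V|$. The base case $|V|=1$ forces $p^v=1$, and the infinite repetition of that single vertex is a regular sequence with period $1$, as required.

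For the inductive step, assume $|V|\ge 2$ and let $P=\max_{v\in V} p^v$. Since $\sum_v 1/p^v = 1$ and $|V|\ge 2$, we must have $P\ge 2$. The first key step is to show that the number of vertices with $p^v=P$ is even. Multiplying the identity $\sum_v 1/p^v=1$ by $P$ gives $\sum_v P/p^v = P$. Each term $P/p^v$ is a power of $2$; for vertices attaining the maximum this term equals $1$, and for every other vertex $p^v\le P/2$, so the corresponding term is an even power of $2$. Since $P$ itself is even, the number of unit contributions must be even.

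With evenness in hand, pair up the max-period vertices arbitrarily. For each pair $(u_i,w_i)$, introduce a single virtual vertex $x_i$ with period $P/2$, and let $V'$ be obtained from $V$ by replacing each such pair with its virtual vertex. Then $|V'|<|V|$, every period in $V'$ is a power of $2$, and $\sum_{v\in V'} 1/p^v=1$, so the inductive hypothesis yields a regular sequence for $V'$ in which $x_i$ appears at positions $a_i, a_i+P/2, a_i+P, a_i+3P/2,\ldots$ for some offset $a_i$. To build a regular sequence for $V$, keep every non-virtual vertex in its place and, at the occurrences of each $x_i$, alternate between $u_i$ and $w_i$: place $u_i$ at positions $a_i+kP$ and $w_i$ at positions $a_i+P/2+kP$ for all $k\ge 0$. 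Then both $u_i$ and $w_i$ appear with period exactly $P$, while all other vertices retain their original periods, completing the induction.

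The essentially only nontrivial ingredient is the parity argument for the multiplicity of the maximum period; once that is established, the pairing-and-halving construction (which is really just building, bottom-up, the binary tree underlying the Kraft equality $\sum 1/p^v=1$) carries the proof through with no further obstacles.
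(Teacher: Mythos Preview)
Your proof is correct, but it takes a route that is in a sense dual to the paper's. The paper also argues by induction on $|V|$, but proceeds \emph{top-down}: for $|V|\ge 2$ it invokes Claim~\ref{cl:partition} to split $V$ into two parts $V_1,V_2$ each with $\sum_{v\in V_i} 1/p^v = 1/2$, halves all periods, recurses on each part to obtain regular sequences $\mathcal{S}_1,\mathcal{S}_2$, and then interleaves $\mathcal{S}_1$ and $\mathcal{S}_2$ at alternate positions. Your argument is the \emph{bottom-up} counterpart: rather than splitting at the root of the implicit Kraft tree, you merge siblings at the leaves by pairing up the maximum-period vertices, replacing each pair by a single virtual vertex of half the period, recursing, and then un-merging by alternating. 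Your parity argument plays the same structural role that Claim~\ref{cl:partition} plays in the paper, and is arguably the more direct of the two (it is self-contained and avoids the separate claim). Either way one is really constructing the same binary tree witnessing the Kraft equality $\sum 1/p^v=1$; the two proofs simply traverse it in opposite directions.
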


Here, a regular sequence means an infinite sequence of the vertices such that for any vertex, the time difference between any two consecutive occurrence of that vertex in the sequence is the same. 

\begin{proof}
    We induce on $|V|$. If $|V|=1$, then the schedule is simply repeating that vertex. For $|V|\ge 2$, by Claim \ref{cl:partition}, $V$ can be partitioned into two parts $V_1, V_2$ such that $\sum_{v\in V_i} 1/p^v =1/2$ for $i=1, 2$. It follows that $p^v\ge 2$ for all $v\in V$. Let $q^v=p^v/2$. By induction, there exists a regular schedule $\mathcal{S}_i$ for $V_i$ such that the period of any vertex is $q^v$, for $i=1, 2$. Consider interleaving the two schedules by alternating among them. Note that the combined schedule is also regular and every vertex $v$ has period $2q^v=p^v$, as required. 
\end{proof}

\begin{proof}[Proof of Lemma \ref{lem:regsched}]
    We first build a schedule that ignores the matching constraint in the telephone model where we allow a vertex $u$ is allowed to send and receive information in the same time step. However, $u$ is still only allowed to send information to one of its children in a single time step. Finally, we use the idea of alternating odd and even matchings to produce a proper schedule from the path scheduled at each step. 
    
    Given vertex $u$, let $O^u_i\subseteq N(u)$ denote the subset of children whose offset value is $i$, for $1\le i\le p^u$. Let $\{t^v_j\}_{j\ge 0}$ be the update sequence of $v$. We build our schedule by inductively defining the update sequence for every vertex starting from the root. At every vertex $v$, we show that $v$ receives a new information according to its update sequence and, $v$ has period $p^v$ and offset $o^v$.
    
    At the root $r$, $t^r_i=i$ for all $i\ge 0$. It is clear that $r$ is updated according to its update sequence and has the correct period and offset. Now, assume $u$ received its first fresh information at $t^u_0$ and hears a new information every $p^u$ steps. Let us define the update sequence for its children.
    
    Fix $1\le i\le p^u$ and consider $O^u_i$. Since $o$ is a regular assignment of offset, Condition \eqref{offsum} holds. Let $q^v = p^v/p^u$. It follows from Condition \eqref{offsum} that $\sum_{v\in O^u_i} q^v\le 1$. Let $2^k$ be the largest value of $q^v$. Obtain $\tilde{O}^v_i$ by adding enough dummy vertices into $O^u_i$ with period $2^k$ such that $\sum_{v\in \tilde{O}^u_i} 1/q^v =1$. Let $\mathcal{\tilde{S}}$ be the regular sequence of $\tilde{O}^u_i$ obtained by applying Lemma \ref{lem:repeat}. For $v\in O^v_i$, let $\{\tilde{t}_j^v\}_{j\ge 0}$ be the sequence of time where $v$ appears in schedule $\mathcal{\tilde{S}}$. By Lemma \ref{lem:repeat}, $\tilde{t}_{j+1}^v-\tilde{t}_j^v = q^v$ for all $j\ge 0$. Then, let $t_j^v = t_0^u+i+p^u \tilde{t}^v_j$. We now show that this corresponds to the situation where every time $u$ hears a new information, $u$ waits for $i$ additional steps and informs the vertices in $O^u_i$ according to the sequence $\mathcal{\tilde{S}}$.
    
    First, note that $t^v_i$ is regular and has interval $p^uq^v = p^v$. Since $u$ hears an update every $p^u$ steps, it follows that $u$ receives a new information at $t^u_0+p^u \tilde{t}^v_j$ for all $j\ge 0$. Then, it follows that $v$ has an offset of $t^v_j-(t^u_0+p^u \tilde{t}^v_j) = i$. Lastly, it remains to show that at every time step, $u$ only sends information to one of its children.
    
    Suppose for the sake of contradiction that at some time $t$, according to the above update sequence, $v$ and $w$ receives information from $u$ at the same time. This implies, $v$ and $w$ has the same offset $i$. Let $t=t^u_0+i +p^u \tilde{t}$. Then, $v$ and $w$ both appear at time $\tilde{t}$ in sequence $\mathcal{\tilde{S}}$, a contradiction. 
    
    Then, the update sequences induces a regular schedule where every vertex $v$ has period $p^v$ and offset $o^i$ and no vertex $u$ sends information to two of its children at the same time. This implies that at every time step, the edges used to send information induces a collection of paths. For every time step, consider first sending information along the edges in the paths on the odd levels, then send information along the edges in the paths on the even levels. This ensures no vertex receives and sends information in the same time step, satisfying the telephone model. It doubles the period of any vertex and worsens the offset by at most a factor of $2$, as required. 
    
\end{proof}


\subsection{Putting it Together: Upper Bounding the Constructed Schedule}
\label{subsec:together}
In this subsection, we bound the average latency of the schedule $\mathcal{S}$ constructed from the previous sections.
Recall that $p^v$ is a regular assignment of periods that is a $2$-approximation to Problem \ref{prob:regperi} obtained from Lemma \ref{lem:regperi}.
Then using the ordering $\pi$ from Lemma ~\ref{lem:ABTopt} in Lemma~\ref{lem:off}, we construct a regular schedule $\mathcal{S}$. 
It follows from Lemma \ref{lem:reglat}, Corollary~\ref{col:bound2} and Lemma \ref{lem:regsched} that $\mathcal{S}$ has an average latency of at most $\frac{1}{n}(2\sum_{v\in V(G)} \sum_{u\in P_v} o^u + 2\sum_{v\in V(G)} p^v)$. 

We will bound the two sums separately. 
Note that to bound the periods, we can use Corollary~\ref{col:bound2}. 
We bound the offset terms below in terms of $opt_{VC}$ and the sum of the periods.
In the following lemma, the offset terms along a path to the root are broken into chains where the offset values decrease by a factor of at least two, and those where the offset terms are the same as the parent. The former are charged to the period terms via a geometric sum, while the latter are charged to a copy of $opt_{VC}$ by going via $opt_{ABT}$.
\begin{lemma}
\label{lem:bound1}
    Let $opt_{VC}$ be the optimal value to AvgRVC. For every vertex $v$ let $P_v$ be the path from the root to $v$. For the schedule $\mathcal{S}$ constructed as explained above, $\sum_{v\in V(T)} \sum_{u\in P_v} o^u \le (2n)opt_{VC} +\sum_{v\in V(T)} p^v$. 
\end{lemma}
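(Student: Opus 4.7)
The plan is to fix a leaf $v$ with path $P_v = u_0 u_1 \cdots u_k$ (with $u_0 = r$, $u_k = v$) and split the internal offsets $o^{u_i}$ (for $i \ge 1$) into two groups according to the relationship between $p^{u_i}$ and $p^{u_{i-1}}$: the \emph{jump} set $J_v = \{i : p^{u_i} > p^{u_{i-1}}\}$ and the \emph{flat} set $F_v = \{i : p^{u_i} = p^{u_{i-1}}\}$ (recall that along any root-to-leaf path periods are non-decreasing and are powers of two, so one of these two cases must hold). We will charge the contribution from $J_v$ to $\sum_v p^v$ via a geometric sum, and the contribution from $F_v$ to $n \cdot opt_{ABT}$, which in turn is at most $2n \cdot opt_{VC}$ by Lemma~\ref{lem:VCABT}.

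For the jump contribution, note that in a regular schedule $o^{u_i} \le p^{u_{i-1}}$, and if $i \in J_v$ then $p^{u_{i-1}} \le p^{u_i}/2$ because periods are powers of two and strictly increase at this step. If $i_1 < i_2 < \cdots < i_m$ are the indices in $J_v$, then $p^{u_{i_j}} \le p^{u_{i_{j+1}-1}} \le p^{u_{i_{j+1}}}/2$, so $p^{u_{i_j}} \le p^v/2^{m-j}$. Therefore
\[
\sum_{i \in J_v} o^{u_i} \;\le\; \sum_{j=1}^m \frac{p^{u_{i_j}}}{2} \;\le\; \frac{p^v}{2}\sum_{j=1}^{m} \frac{1}{2^{m-j}} \;\le\; p^v.
\]
Summing over all leaves (and in fact over all vertices $v$, interpreting $P_v$ as the path from $r$ to $v$), the total jump contribution is at most $\sum_{v \in V(T)} p^v$.

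For the flat contribution, we use the second clause of Lemma~\ref{lem:off}: if $p^{u_i} = p^{u_{i-1}}$, then $o^{u_i} \le \pi^{u_{i-1}}(u_i)$, where $\pi^{u_{i-1}}$ is the ordering of the children of $u_{i-1}$ in non-increasing subtree size. By Lemma~\ref{lem:ABTopt}, this ordering defines an optimal ABT schedule on $T$, so the quantity $\sum_{i=1}^k \pi^{u_{i-1}}(u_i)$ is exactly the delay $d_v$ at which $v$ is informed under that optimal schedule. Hence
\[
\sum_{i \in F_v} o^{u_i} \;\le\; \sum_{i=1}^{k} \pi^{u_{i-1}}(u_i) \;=\; d_v.
\]
Summing over $v$ and dividing by $n$ gives $\frac{1}{n}\sum_v \sum_{i \in F_v} o^{u_i} \le opt_{ABT}$, and Lemma~\ref{lem:VCABT} yields $opt_{ABT} \le 2\,opt_{VC}$, so the flat contribution is bounded by $2n\cdot opt_{VC}$. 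Adding the two bounds proves the lemma.

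The main subtlety I expect is making the geometric sum in the jump bound entirely rigorous, in particular handling edge cases (empty $J_v$, $v = r$, or paths where $p^v$ never strictly increases) and verifying that the conclusion of Lemma~\ref{lem:off} applies uniformly at every internal node along $P_v$ with the chosen ordering $\pi^u$; the flat case then pipes cleanly through Lemmas~\ref{lem:ABTopt} and~\ref{lem:VCABT}.
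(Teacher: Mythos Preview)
Your proposal is correct and follows essentially the same approach as the paper: the paper partitions the vertices along $P_w$ into a set $V_0$ where the period equals the parent's period (your flat set $F_v$) and its complement (your jump set $J_v$), bounds the jump contribution by $p^w$ via the same geometric-doubling argument, and bounds the flat contribution by $n\cdot opt_{ABT}\le 2n\cdot opt_{VC}$ using Condition~\eqref{offorder} and Lemmas~\ref{lem:ABTopt} and~\ref{lem:VCABT} exactly as you do.
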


\begin{proof}
    Let $V_0\subseteq (V(T)\setminus \{r\})$ be the set of vertices $v$ whose period $p^v$ is the same as its parent's period $p^u$. Consider a vertex $v\notin V_0$. Let $u$ be the parent of $v$. Since $p^u\le p^v$ (using Condition \eqref{regparent}) and the periods are all powers of $2$ (Condition \eqref{regpower2}), $p^v\ge 2p^u$. By definition of offset, $o^v\le p^u$ and thus $o^v\le p^v/2$.
    
    Then, fix a vertex $w\in V(T)$. It follows that $\sum_{u\in (V(P_w)\backslash V_0)} o^u \le \sum_{u\in (V(P_w)\setminus V_0)} p^u/2$. Since we only sum over vertices not in $V_0$, every subsequent term in the sum is at least twice as large as the previous term. Overall, the whole sum is at most as large as twice the last term. Therefore, the sum is at most $p^w$. 
    
    Recall that from Lemma \ref{lem:ABTopt}, $\sum_{v\in V(P_w)}\pi(v)$ is the delay of $w$ in an optimal ABT schedule. Note that for a vertex $v\in V(P_w)\cap V_0$, it follows from Condition~\eqref{offorder} in Lemma \ref{lem:off} that $o^v\le \pi(v)$. Then $$\sum_{w\in V(T)}\sum_{v\in V(P_w)\cap V_0} o^v \le \sum_{w\in V(T)}\sum_{v\in V(P_w)\cap V_0} \pi(v) \le \sum_{w\in V(T)} \sum_{v\in V(P_w)} \pi(v) = n(opt_{ABT})$$ where $opt_{ABT}$ is the value of the optimal ABT solution. Then, by Lemma \ref{lem:VCABT}, \\$\frac{1}{n}\sum_{w\in V(T)}\sum_{v\in V(P_w)\cap V_0} o^v \le 2opt_{VC}$. 
    Therefore, 
    $$\sum_{w\in V(T)}\sum_{v\in V(P_w)} o^v = \sum_{w\in V(T)}\sum_{v\in V(P_w)\cap V_0} o^v + \sum_{w\in V(T)}\sum_{v\in V(P_w)\setminus V_0} o^v \le (2n)opt_{VC}+\sum_{w\in V(T)} p^w$$ proving our lemma. 
\end{proof}

\begin{proof}[Proof of Theorem \ref{treeapprox}]
    Given tree $T$, construct schedule $\mathcal{S}$ by applying Lemma \ref{lem:regsched}. It follows from Lemma \ref{lem:reglat} and Lemma \ref{lem:regsched} that the average latency of $\mathcal{S}$ at any time is bounded by $\frac{1}{n}(\sum_{v\in V(T)}\sum_{u\in P_v} 2o^u + \sum_{v\in V(T)} 2p^v)$. Then, by Lemma \ref{lem:bound1} and Corollary \ref{col:bound2}, our result follows immediately.
\end{proof}

\section{Open Problems}
We have introduced the average broadcast time problem in general graphs and provided the first poly-logarithmic approximation algorithm for it, by reducing it to a series of maximum time versions using an LP relaxation solution.
The natural open questions arising from our work are to improve the approximation ratios for the algorithms perhaps even by considering special cases such as planar graph instances.

We related the vector clock latency problems to their delay time versions to derive many of our results. In the process, the subgraphs used for the information transmissions have mainly been restricted to appropriate spanning trees which are sufficient for the Max and Average broadcast time problems. Is there an advantage to using edges other than in a spanning tree for the vector clock problems, or are such more complex solutions only within a constant factor of the former? E.g., in a simple cycle, repeatedly choosing the alternate matchings is the best solution for the vector clock latency problems, but restricting oneself to a tree and the two matchings in the tree still provides a solution that is only a constant factor away for the average and the maximum latency problems. However in a generalized multicommodity version~\cite{latin18} where only the latencies between the endpoints of the the edges of the cycle are minimized, the maximum latency version of the problem suffers a very large loss when restricted to a path. What is the boundary of communication requirements that necessitate looking at more than trees in devising latency schedules?

\bibliographystyle{abbrv}
\bibliography{biblio-2022}
\begin{appendix}
\section{Missing Proofs from Section~\ref{sec:prelim}}
\label{sec:app}

\begin{proof}[Proof of Theorem \ref{periopt}]
	The new schedule is simply a repetition of the first $opt+1$ steps of the original schedule $\mathcal{S}$. Note that $l^t{}'=l^t$ for $0\le t\le opt$ since the two schedules start off with the same sequence of matchings. 
	
	First we show that after one period, the sequence of latency vectors $\{l^t{}'\}_{t\ge opt}$ becomes periodic. 
	Fixing a vertex $v$, consider the sequence of latencies $\{l^t_v{}'\}_{t\ge 0}$. In the first $opt+1$ steps, $v$ would have received some information from $r$, otherwise $l^{opt}_v= opt+1$ and $||l^{opt}_v{}'||\ge opt+1$, a contradiction. Then, let $t'\le opt$ be the first time $v$ hears something about $r$ and let $l^{t'}_v{}'=\lambda'$. In other words, from time $t'-\lambda'$ to time $t'$, the schedule managed to send information from $r$ to $v$. Since our new scheme is periodic, this implies that for every $k\in \mathbb{N}$, the scheme is able to send information at time $t'-\lambda'+k(opt+1)$ from $r$ and reaches $v$ at time $t'+k(opt+1)$. Thus the latency at time $t'+k(opt+1)$ is guaranteed to be $\lambda'$. Since the schedule is periodic, it follows that the sequence $l^t_v{}'$ stays periodic after $t'$ and thus stays periodic after time $opt$. Since the above is true for any vertex $v$, the latency vector $l^t{}'$ also stays periodic after time $opt$.
	
	Knowing that the latency vector is periodic after one period, we only need to consider the vectors of the first two periods. In the first period, since the schedule is the same as before, it is clear that the norm at any time is still upperbounded by $opt$. Now consider a time $opt+k$ for some $0\le k< opt+1$ and the latency of information from $r$ to $v$. At time $opt$, the latency at $v$ is $l^{opt}_v$. Since the schedule is periodic, any further delay from time $opt$ to time $opt+k$ is at most the delay caused from time $0$ to time $k$. Note that the delay from time $0$ to time $k$ is exactly the latency at time $k$. In other words, $l^{opt+k}{}'\le l^{opt}{}'+l^{k}{}'$. Then, it follows that $||l^{opt+k}{}'||\le ||l^{opt}{}'+l^{k}{}'||\le ||l^{opt}{}'||+||l^{k}{}'||\le 2opt$, as required. 
\end{proof}

\begin{proof}[Proof of Theorem \ref{redn}]
    Given an instance of MaxRVC with graph $G$ and root $r$, let $opt_{VC}$ be its optimal value. Consider the instance of the Minimum Broadcast-Time Problem on $G$ with root $r$ and let $opt_{BC}$ be the optimal value. Consider a schedule with latency vector $l^t$ such that $||l^t||_{\infty} = opt_{VC}$. Then, it follows that the first time any vertex $v$ hears an information from the root is at most $opt_{VC}$. Therefore, this schedule has a minimum broadcast-time of at most $opt_{VC}$ and thus $opt_{VC}\ge opt_{BC}$. 
    
    Let $\mathcal{S}$ be a schedule that is an $\alpha$-approximation to the instance of Minimum Broadcast-Time. In other words, this schedule guarantees all vertices receives an information from $r$ before time $\alpha opt_{BC}$. Without loss of generality, we may assume that $\mathcal{S}$ has length at most $\alpha opt_{BC}$ since every vertex would have been informed by then. Consider the schedule $\mathcal{S}'$ where we simply repeat schedule $\mathcal{S}$ ad infinitum. Let $l^t{}'$ be the latency vector for $\mathcal{S'}$. Fix a vertex $v$ and let $t'$ be the first time $v$ hears an information from the root $r$. Note that the latency at $t'$ is at most $\alpha opt_{BC}$. Since the schedule is periodic, $v$ is guaranteed to hear something from $r$ at time $t'+T$ where $T$ is the length of the schedule $\mathcal{S}$. 
    Since $T \leq \alpha opt_{BC}$, it follows that the latency of $v$ is at most $2(\alpha opt_{BC})$. Our theorem follows immediately by combining the two inequalities relating $opt_{C}$ and $opt_{BC}$.

\end{proof}

\section{A Logarithmic Gap Between the Optima of AvgRVC and ABT in Trees}
\label{sec:gap}
Theorem \ref{th:sandwich} allows us to obtain a periodic Vector Clock schedule whose latency is at most an $O(\log n)$-factor more than the optimal ABT value. In this section, we show that this factor is unavoidable even in trees.

\begin{theorem}
\label{AvgGap}
    For every $n \in \mathbb{N}$ there exists a tree $T$ such that the optimal objective value of AvgRVC on $T$ is at least $\Omega(\log^2 n)$. Meanwhile, the optimal objective value of ABT on $T$ is at most $O(\log n)$. 
\end{theorem}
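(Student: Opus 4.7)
The plan is to construct a family of trees $\{T_n\}$ that achieves the required separation between the two objectives, and to prove the ABT upper bound and the AvgRVC lower bound separately.

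To upper bound the optimum ABT on $T_n$, I would apply Lemma~\ref{lem:ABTopt}: at every internal node, inform children in non-increasing order of subtree size. For a suitably balanced $T_n$ of depth $O(\log n)$, one can set up a recurrence for the total sum of delays $S$ (decomposing according to root children), and solve by induction to show $S = O(n\log n)$; dividing by $n$ yields ABT $= O(\log n)$.

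To lower bound the optimum AvgRVC, I would invoke the period-offset decomposition developed in Section~\ref{sec:tree}. By Theorem~\ref{periopt}, one may restrict attention to periodic schedules, which by the analysis in Lemma~\ref{lem:makeperiodic} induce a regular period assignment $\{p_v\}$ satisfying $\sum_{c \in N(u)} 1/p_c \le 1$ at every internal node $u$ together with the monotonicity constraint $p_v \ge p_u$. Combining these constraints with the specific structure of $T_n$, I would show that any regular schedule either forces average period $\Omega(\log^2 n)$ or forces sum-of-offsets along typical root-to-node paths (which contribute at each ancestor $u$ at least its ``sibling position'' there) to be $\Omega(\log^2 n)$; either way, Lemma~\ref{lem:reglat} yields AvgRVC $\ge \Omega(\log^2 n)$.

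The main obstacle is twofold. First, choosing the construction is delicate: too uniform a tree (like the complete binary tree) collapses the optimal periods to a constant and gives AvgRVC $= O(\log n)$, while too skewed a tree (like a star) inflates ABT to $\Omega(n)$. The extremal example likely requires a multi-level hierarchy with branching factors and subtree sizes tuned so that the root-level bandwidth constraint $\sum 1/p_c \le 1$ propagates a $\Theta(\log n)$ factor into the periods, while the subtree structure still admits the non-increasing-subtree-size broadcast schedule of Lemma~\ref{lem:ABTopt} in $O(\log n)$ average delay. Second, the lower bound argument must rule out clever non-regular schedules: the reduction in Lemmas~\ref{lem:reglat}--\ref{lem:makeperiodic} only loses a constant factor when passing from general periodic schedules to regular ones, so this step is routine, but careful bookkeeping is needed to ensure all constants work out. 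I expect the resulting tree to match, up to constants, the $\Theta(\log n)$-overhead of the algorithmic reduction in Theorem~\ref{avgdelaytolat}, thereby certifying that the logarithmic factor incurred there is unavoidable.
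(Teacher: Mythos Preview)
Your outline identifies the right tension (too uniform collapses periods, too skewed inflates ABT), but there are two genuine gaps.

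\textbf{The construction is missing, and your intuition about it is off.} You speculate that the extremal tree ``likely requires a multi-level hierarchy with branching factors and subtree sizes tuned'' across levels. In fact the paper's construction is much simpler: the only interesting level is the root. One takes a root $r$ with children $v_1,\dots,v_d$ where $v_i$ heads a complete binary subtree of size $n_i=\lceil n/(2i^2)\rceil$, and $d$ is chosen so the sizes sum to $n$ (so $d\ge\sqrt{n}$). The entire lower bound then comes from the single bandwidth constraint $\sum_{i=1}^d 1/q_i\le 1$ at the root, combined with the fact that every node in the $i$-th subtree inherits period at least $q_i$. The optimization $\min\sum n_i q_i$ subject to $\sum 1/q_i\le 1$ is solved by Cauchy--Schwarz (or Lagrange multipliers) and equals $(\sum\sqrt{n_i})^2$; with $n_i\sim n/i^2$ this is $\sim n(\sum_{i\le d}1/i)^2\sim n\log^2 n$. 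That specific choice of $n_i$ is the whole trick, and nothing in your plan points toward it.

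\textbf{Your lower-bound mechanism is cited in the wrong direction.} You write that once periods or offsets are forced large, ``Lemma~\ref{lem:reglat} yields AvgRVC $\ge\Omega(\log^2 n)$.'' But Lemma~\ref{lem:reglat} is an \emph{upper} bound on the latency of a regular schedule in terms of its periods and offsets; large periods/offsets in some schedule say nothing about the optimum being large. The correct tool is Lemma~\ref{lem:makeperiodic}, which goes the other way: from any schedule with value $opt_{VC}$ one extracts periods $\{q^v\}$ satisfying $\sum_v q^v\le 4n\cdot opt_{VC}$ and the constraints \eqref{regparent}--\eqref{regsum}. Hence a lower bound on $\sum_v q^v$ over all feasible period assignments is a lower bound on $4n\cdot opt_{VC}$. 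Offsets play no role in the lower bound at all; your ``either periods or offsets'' dichotomy is unnecessary and, as stated, unsound.

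For the ABT upper bound your plan is fine in spirit: with the construction above, informing $v_1,\dots,v_d$ in order gives $v_i$ delay $i$, the binary subtree adds $O(\log n_i)$, and summing $\sum n_i(i+2\log n_i)$ by splitting at $i=\log n$ gives $O(n\log n)$.
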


Consider the following tree $T$. Let $r$ be the root of $T$. Let $v_1, ..., v_d$ be the children of $r$. Suppose each child $v_i$ is the root of a complete binary tree of size $n_i=\lceil n/(2i^2)\rceil$. Choose $d$ such that $\sum_{i=1}^d n_i = n$. Note that $d\ge \sqrt{n}$.
The following lemma provides a lower bound to the optimal value of AvgRVC on $T$. 

\begin{lemma}
\label{lem:gap1}
    Let $opt_{RVC}$ be the optimal value of AvgRVC rooted at $r$ in $T$. Then $opt_{RVC}\ge \frac{1}{16}\log^2 n$
\end{lemma}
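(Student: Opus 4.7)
The plan is to lower bound the time-average of $\tfrac{1}{n}\sum_v l_v^t$, which is itself a lower bound on the max-over-time objective, and to exploit that every subtree rooted at a $v_i$ is fed through the single bottleneck edge at $r$, while $d=\Omega(\sqrt{n})$ children compete for that bottleneck. Fix any schedule and any large window length $N$. Since $r$ belongs to at most one matching edge per time step, letting $k_i$ denote the number of time steps in $[0,N)$ at which $\phi_{v_i}^t$ strictly increases, we have $\sum_{i=1}^d k_i \le N$.

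The key structural claim is that for every vertex $v$ in the subtree rooted at $v_i$, the number $k_v$ of strict increases of $\phi_v^t$ in $[0,N)$ satisfies $k_v \le k_i$. Indeed, any information reaching $v$ from $r$ is relayed through $v_i$, so $\phi_v^t \le \phi_{v_i}^t$ for every $t$; since both views are monotone nondecreasing, each of the $k_v+1$ distinct values in the trajectory of $\phi_v$ on $[0,N)$ must appear in the trajectory of $\phi_{v_i}$ on the same window, which has $k_i+1$ distinct values.

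Between consecutive view-updates the latency at $v$ grows by exactly one per step, so if $g_0,\ldots,g_{k_v}$ are the inter-update gaps at $v$ (summing to $N$), then $\sum_{t=0}^{N-1} l_v^t \ge \sum_j g_j(g_j-1)/2$. Cauchy--Schwarz on the $g_j$ yields $\sum_j g_j^2 \ge N^2/(k_v+1)$, so the time-averaged latency at $v$ is at least $N/(2(k_i+1)) - 1/2$. Summing over all non-root vertices and grouping by subtree gives a time-averaged total latency of at least $\tfrac{N}{2}\sum_i \tfrac{n_i}{k_i+1} - \tfrac{n}{2}$. A second application of Cauchy--Schwarz, now on the $k_i$'s, gives
\[
\sum_i \frac{n_i}{k_i+1} \;\ge\; \frac{\bigl(\sum_i \sqrt{n_i}\bigr)^2}{\sum_i(k_i+1)} \;\ge\; \frac{\bigl(\sum_i \sqrt{n_i}\bigr)^2}{N+d}.
\]
Sending $N\to\infty$ and dividing by $n$ yields $opt_{RVC}\ge \bigl(\sum_i\sqrt{n_i}\bigr)^2/(2n) - 1/2$.

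Finally, $n_i \ge n/(2i^2)$ and $d\ge\sqrt{n}$ give
\[
\sum_{i=1}^d \sqrt{n_i} \;\ge\; \sqrt{n/2}\sum_{i=1}^d \frac{1}{i} \;=\; \Omega(\sqrt{n}\log n),
\]
so $\bigl(\sum_i\sqrt{n_i}\bigr)^2/(2n) = \Omega(\log^2 n)$, and tracking constants (interpreting $\log$ as $\ln$) produces precisely the stated $\tfrac{1}{16}\log^2 n$. The main obstacle is the clean argument that $k_v\le k_i$, which is the crux because it propagates the root bottleneck into every leaf of the subtree; the two Cauchy--Schwarz steps and the harmonic-sum estimate $\sum_{i\le d}1/i=\Omega(\log n)$ are otherwise routine.
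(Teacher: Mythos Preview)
Your argument is correct and reaches the same $(\sum_i\sqrt{n_i})^2$ bound, but by a genuinely different route than the paper. The paper invokes its earlier Lemma~\ref{lem:makeperiodic} (which itself rests on Theorem~\ref{periopt}) to replace the optimal schedule by a locally periodic one with $\sum_v q^v \le 4n\cdot opt_{RVC}$; the bottleneck at $r$ then becomes the constraint $\sum_i 1/q_i\le 1$, and a Lagrange-multiplier / Cauchy--Schwarz step yields $\sum_i n_i q_i \ge (\sum_i\sqrt{n_i})^2$. You bypass the periodic reduction entirely: working directly with an arbitrary schedule on a window $[0,N)$, you lower-bound the time-average of the latency (which in turn lower-bounds the max-over-time objective), use $k_v\le k_i$ to push the root bottleneck into every leaf of each subtree, and apply Cauchy--Schwarz twice. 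Your approach is self-contained and in fact gains a factor of~$2$ over the paper's before the harmonic estimate (you arrive at $(\sum_i\sqrt{n_i})^2/(2n)$ rather than $/(4n)$), at the cost of a harmless additive $-\tfrac12$; the paper's route is shorter only because it cashes in machinery already built. One remark: the claim that every value attained by $\phi_v$ on $[0,N)$ is already attained by $\phi_{v_i}$ deserves a line of justification---in a tree a strict increase of $\phi_v$ can only come from its parent (children carry no fresher view), so inducting up the path from $v$ to $v_i$ shows each such value was held by $\phi_{v_i}$ at an earlier time in the window.
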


\begin{proof}
It follows from Lemma \ref{lem:makeperiodic} that there exists a locally periodic schedule with periods $q^v$ such that $\sum_{v\in V(T)} q^v \le (4n)opt_{RVC}$. For ease of notation, let $q_i$ be the period of vertex $v_i$ for $1\le i\le d$. Since any descendent of $v_i$ has period at least as large as $q_i$, it follows that $(4n)opt_{RVC}\ge \sum_{i=1}^d n_iq_i$. Since $\sum_{i=1}^d 1/q_i\le 1$, it is easy to check that $\sum_{i=1}^d n_iq_i$ is minimized when $q_i=\lambda/ \sqrt{n_i}$ where $\lambda = \sum_{i=1}^d \sqrt{n_i}$. This gives a minimum value for $\sum_{i=1}^d n_iq_i$ of $\lambda^2$. Note that $\lambda = \sum_{i=1}^d \sqrt{n_i} \ge \sum_{i=1}^d \frac{\sqrt{n}}{i} \ge \sqrt{n} \log d$. Since $d\ge \sqrt{n}$, $\lambda^2\ge n\log^2 n/4$, proving our lemma. 
\end{proof}

Now, to upper bound the broadcasting problem we claim the following:

\begin{lemma}
\label{lem:gap2}
    There exists a broadcasting schedule for $T$ such that the objective value for ABT is at most $4.5\log n$. 
\end{lemma}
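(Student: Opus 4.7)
The plan is to exhibit an explicit broadcast schedule for $T$ and bound its average broadcast time directly. Since the subtree sizes $n_i$ are already nonincreasing in $i$, Lemma~\ref{lem:ABTopt} suggests processing the children of $r$ in the order $v_1,v_2,\ldots,v_d$. Concretely, at time $i$ the root $r$ calls the as-yet-uninformed child $v_i$, while simultaneously every already-informed $v_j$ (for $j<i$) runs an optimal telephone-broadcast schedule within its own complete binary subtree $T_j$. Because the subtrees $T_1,\ldots,T_d$ are vertex-disjoint apart from the root, and the root's single call at each step uses only the edge $\{r,v_i\}$, these parallel broadcasts are all compatible with the telephone model.

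Next I would bound the broadcast time inside each $T_i$: a straightforward induction on height shows that a complete binary tree on $n_i$ nodes can be broadcast from its root in at most $2\lceil \log_2 n_i\rceil\le 2\log_2 n$ steps. Since $v_i$ first hears from $r$ at time $i$, every vertex $v\in T_i$ satisfies
\[
d_v \;\le\; i \,+\, 2\log_2 n.
\]

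Aggregating over all vertices, this gives
\[
\sum_{v\in V(T)} d_v \;\le\; \sum_{i=1}^{d} n_i\, i \;+\; 2(\log_2 n)\sum_{i=1}^{d} n_i \;=\; \sum_{i=1}^{d} n_i\, i \;+\; 2n\log_2 n .
\]
Using $n_i\le n/(2i^2)+1$ and the fact that $d=O(\sqrt n)$ (the requirement $\sum_{i=1}^d n_i=n$ forces almost all of $n$ to be packed into the first $O(\sqrt n)$ children, since $\sum_{i\ge 1} n/(2i^2)=\Theta(n)$), one has
\[
\sum_{i=1}^{d} n_i\, i \;\le\; \frac{n}{2}H_d + \frac{d(d+1)}{2} \;=\; O(n\log n).
\]
Combining and dividing by $n$ yields an average broadcast time of at most $4.5\log n$ once the constants are carefully tracked.

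The main obstacle will not be the structural argument—that the subtree broadcasts run in parallel without ever conflicting with the root's sequential calls is essentially immediate—but rather the bookkeeping needed to pin down the constant $4.5$: this requires being careful with the base of the logarithm and with the tail contribution $d(d+1)/2$, and verifying in the chosen construction that $d$ is indeed $O(\sqrt n)$, so that this tail does not overwhelm the dominant $n\log n$ term.
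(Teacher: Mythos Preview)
Your schedule and the per-vertex delay bound $d_v \le i + 2\log n$ agree with the paper's. The problem is in how you control $\sum_{i=1}^d n_i\, i$, and it is not merely a matter of tracking constants.

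You assert $d = O(\sqrt{n})$, reasoning that since $\sum_{i\ge 1} n/(2i^2) = \Theta(n)$, almost all of the mass is packed into the first $O(\sqrt{n})$ children. Most of the \emph{mass} does sit there, but that does not bound the \emph{number} of children. In fact $\sum_{i\ge 1} n/(2i^2) = n\pi^2/12 \approx 0.82\,n < n$; the shortfall of about $0.18\,n$ must come from the ceiling corrections, and since $n_i = \lceil n/(2i^2)\rceil = 1$ for every $i > \sqrt{n/2}$, the constraint $\sum_{i=1}^d n_i = n$ forces roughly $0.18\,n$ additional singleton children. Hence $d = \Theta(n)$, not $O(\sqrt{n})$, and your tail term $\tfrac{d(d+1)}{2}$ is $\Theta(n^2)$, swamping the target $O(n\log n)$. (The paper only records $d \ge \sqrt{n}$, which is consistent with this.)

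The paper's argument never needs an upper bound on $d$ beyond $d\le n$. It splits the sum at $i=\log n$: for $i\le \log n$ it uses $i+2\log n_i \le 3\log n$ together with $\sum_i n_i \le n$; for $i>\log n$ it uses $i+2\log n_i \le 3i$ and then $n_i\cdot 3i \le 1.5\,n/i$ (treating $n_i$ as $n/(2i^2)$), which sums to $1.5\,n\log d \le 1.5\,n\log n$. That route is insensitive to how large $d$ is, which is precisely the robustness your decomposition into $\tfrac{n}{2}H_d + \tfrac{d(d+1)}{2}$ lacks.
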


\begin{proof}
Consider the schedule where we always inform the children of $r$ in increasing order according to index $i$. Since all subtrees are perfect binary trees, all other nodes simply inform their children as soon as they receive information, in any order they choose. For a child $v^i$ of $r$, it is informed at time $i$. Every descendent of $v^i$ can be informed within an additional time of $2\log n_i$. Then, the sum of delays is at most $\sum_{i=1}^d n_i(i+2\log n_i)$. 

We break the sum into two parts. 
If $1\le i\le \log n$, then, $\sum_{i=1}^{\log n} n_i(i+2\log n_i) \le \sum_{i=1}^{\log n} n_i(3\log n)\le (3n)\log n$. If $\log n\le i \le d$, then $\sum_{i=\log n}^d n_i(i+2\log n_i) \le \sum_{i=\log n}^d n_i(3i) \le \sum_{i=\log n}^d 1.5n/i \le 1.5n \log d \le 1.5n \log n$. Thus, $\sum_{i=1}^d n_i(i+2\log n_i) \le (4.5n)\log n$ and the lemma follows immediately. 
\end{proof}

\begin{proof}[Proof of Theorem \ref{AvgGap}]
The theorem follows from Lemmas \ref{lem:gap1} and \ref{lem:gap2}. 
\end{proof}
\end{appendix}
\end{document}